\documentclass[11pt]{article} % use larger type; default would be 10pt
\usepackage[utf8]{inputenc} % set input encoding (not needed with XeLaTeX)

%%% PAGE DIMENSIONS
\usepackage{geometry} % to change the page dimensions
\geometry{
body={6.5in, 9in},
left=1.0in,%
top=1.0in
}

\usepackage{graphicx, amsmath,amsfonts,cite,appendix, amssymb,xcolor,subfig, hyperref, enumerate, mathrsfs, amsthm,url, mathtools} 
\usepackage{booktabs} % for much better looking tables
\usepackage{array} % for better arrays (eg matrices) in maths
\usepackage{paralist} % very flexible & customisable lists (eg. enumerate/itemize, etc.)
\usepackage{verbatim} % adds environment for commenting out blocks of text & for better verbatim
\usepackage{subfig} % make it possible to include more than one captioned figure/table in a single float
% These packages are all incorporated in the memoir class to one degree or another...

\usepackage{bbm}
\usepackage{pgfgantt}
\usepackage{caption}
\usepackage{sidecap}

\usepackage{algorithm,algorithmicx,algpseudocode}
\algloopdefx{NFor}[1]{\textbf{for} #1 \textbf{do}}

\usepackage[font=small,labelfont=bf,labelsep=space]{caption}
\captionsetup{%
   figurename=Figure,
}

\newcommand{\un}{\underline}
\newcommand{\be}{\begin{equation}}
\newcommand{\ee}{\end{equation}}
\newcommand{\ben}{\begin{equation*}}
\newcommand{\een}{\end{equation*}}
\newcommand{\mc}{\mathcal}

\newtheorem{lem}{Lemma}
\newtheorem{applem}{Lemma}[section]

\newtheorem{thm}{Theorem}

\newcommand{\e}{\epsilon}

\newcommand{\abs}[1]{\left \lvert #1\right \rvert}
\newcommand{\norm}[1]{\left \lVert #1\right \rVert}
\newcommand{\expec}{\mathbb{E}}

\parindent 5mm

\title{\vspace*{-1in} Multiprocessor Approximate Message Passing \\with Column-Wise Partitioning\footnote{This document serves as a supporting document for \cite{MaLuBaron2017}}}
\author{{Yanting Ma}\\ {North Carolina State University}
  \\ {\small {yma7@ncsu.edu}}
\and {Yue M. Lu}\\ {Harvard University}
  \\ {\small {yuelu@seas.harvard.edu}}
 \and {Dror Baron} \\ {North Carolina State University}
  \\ {\small{barondror@ncsu.edu}}
}
\date{\vspace{-0.2in}}

\begin{document}
\maketitle
\begin{abstract}
Solving a large-scale regularized linear inverse problem using multiple processors is important in various real-world applications due to the limitations of individual processors and constraints on data sharing policies. This paper focuses on the setting where the matrix is partitioned column-wise. We extend the algorithmic framework and the theoretical analysis of approximate message passing (AMP), an iterative algorithm for solving linear inverse problems, whose asymptotic dynamics are characterized by state evolution (SE). In particular, we show that column-wise multiprocessor
AMP (C-MP-AMP) obeys an SE under the same assumptions when the SE for AMP holds. The SE results imply that ({\em i}) the SE of C-MP-AMP converges to a state that is no worse than that of AMP and ({\em ii}) the asymptotic dynamics of C-MP-AMP and AMP can be identical. Moreover, for a setting that is not covered by SE, numerical results show that damping can improve the convergence performance of C-MP-AMP.
\end{abstract}

\section{Introduction}
\label{sec:intro}

Many scientific and engineering problems can be modeled as
solving a regularized linear inverse problem of the form
\begin{equation}
y = Ax + w, 
\label{eq:SP_sys} 
\end{equation}
where the goal is to estimate the unknown $x\in\mathbb{R}^N$ given the matrix $A\in\mathbb{R}^{n\times N}$ and statistical information about the signal $x$ and the noise $w\in\mathbb{R}^n$. 

In some scenarios, it might be desirable to partition the matrix $A$ either column-wise or row-wise and store the sub-matrices at different processors. 
The partitioning style depends on data availability, computational considerations, and privacy concerns.
For example, in high-dimensional settings where $N\gg n$, 
or in situations where the columns of $A$, which represent features in feature selection problems~\cite{Hastie2001}, cannot be shared among processors for privacy preservation, column-wise partitioning might be preferable.
In this paper, we consider multiprocessor computing for the (non-overlapping) column-wise partitioned linear inverse problem:
\begin{equation}
y = \sum_{p=1}^P A_p x_p + w, 
\label{eq:MP_sys}
\end{equation}
where $P$ is the number of processors, $A_p\in\mathbb{R}^{n\times N_p}$ is the sub-matrix that is stored in Processor $p$, and $\sum_{p=1}^P N_p=N$.

Many studies 
on solving the column-wise partitioned linear inverse problem~(\ref{eq:MP_sys})
have been in the context of distributed feature selection.
Zhou {\em et al.}~\cite{Zhou2014} modeled feature selection as a parallel group testing problem.
Wang {\em et al.}~\cite{WangDunsonLeng2016} proposed to de-correlate the data matrix before partitioning, and each processor then works independently using the de-correlated matrix without communication with other processors.
Peng {\em et al.}~\cite{PengYanYin2013} studied problem~(\ref{eq:MP_sys}) in the context of optimization, where they proposed a greedy coordinate-block descent algorithm and a parallel implementation of 
the fast iterative shrinkage-thresholding algorithm
(FISTA)~\cite{Beck2009FISTA}.

Our work is based on the approximate message passing (AMP) framework~\cite{DMM2009}.
AMP is an efficient iterative algorithm for solving 
linear inverse problems~(\ref{eq:SP_sys}). 
In the large scale random setting, its average asymptotic dynamics are characterized by a state evolution (SE) formalism~\cite{Bayati2011}, which allows one to accurately predict the average estimation error at every iteration. Recently, a finite-sample analysis of AMP \cite{RushV16} showed that when the prior distribution of the input signal $x$ has i.i.d. sub-Gaussian entries,\footnote{A random variable $X$ is sub-Gaussian if there exist positive constants $c$ and $\kappa$ such that $P(|X-\mathbb{E}X|>\epsilon)\leq ce^{-\kappa\epsilon^2}, \forall \epsilon >0$.} the average performance of AMP concentrates to the SE prediction at an exponential rate in the signal dimension $N$.

Our goal is to extend the AMP algorithmic framework and the SE analysis in \cite{RushV16} to the column-wise
partitioned linear inverse problem~(\ref{eq:MP_sys}).
We show that column-wise multiprocessor AMP (C-MP-AMP) obeys a new SE under the same model assumptions where the SE for AMP holds. 
With the new SE, we can predict the average estimation error in each processor at every iteration.
Moreover, the comparison between the SE of AMP and that of C-MP-AMP implies that
({\em i}) the estimation error of C-MP-AMP is no worse than that of AMP and
({\em ii}) with a specific communication schedule between the processors and the fusion center that coordinates the processors, the asymptotic dynamics
of C-MP-AMP are identical to that of AMP. This result implies a speedup linear in the number of processors.

It is worth mentioning that row-wise multiprocessor AMP~\cite{Han2014,Han2015ICASSP,HanZhuNiuBaron2016ICASSP} obeys the same SE as AMP, because it distributes
the computation of matrix-vector multiplication among multiple processors and aggregates the results before any other operations. 
Some existing work on row-wise multiprocessor AMP~\cite{HanZhuNiuBaron2016ICASSP,ZhuBeiramiBaron2016ISIT,ZhuBaronMPAMP2016ArXiv}
introduces lossy compression to the communication between processors and the fusion center, 
whereas we assume perfect communication and focus on the theoretical justifications and implications of the new SE of C-MP-AMP.

The remainder of the paper is organized as follows. 
Section~\ref{sec:algo} introduces the C-MP-AMP algorithm (Algorithm \ref{algo:mp_amp_algo}), the state evolution sequences, and our main performance guarantee (Theorem \ref{thm:mainresult}), which is a concentration result for PL loss functions acting on the outputs generated by Algorithm \ref{algo:mp_amp_algo} concentrates to the state evolution prediction.
Section~\ref{sec:proof} proves Theorem \ref{thm:mainresult}. The proof is mainly based on Lemmas \ref{lem:cond_htbt} and \ref{lem:main_lemma}. The proof of Lemma \ref{lem:cond_htbt} is the same as in \cite{RushV16} using the result that we prove in Lemma \ref{lem:Ap_cond}. Section \ref{sec:proof_mainlem} proves Lemma \ref{lem:main_lemma}.

\section{Column-Wise Multiprocessor AMP and State Evolution}
\label{sec:algo}

\subsection{Review of AMP}
\label{subsec:review_AMP}
Approximate message passing (AMP)~\cite{DMM2009} is a fast iterative algorithm 
for solving linear inverse problems (\ref{eq:SP_sys}).
Starting with an all-zero vector $x^0$ as its initial estimate, at the $t$th iteration, AMP proceeds according to
\begin{align}
z^t &= y - A x^t+\frac{z^{t-1}}{n}\sum_{i=1}^N \eta_{t-1}'([x^{t-1}+A^*z^{t-1}]_i),\label{eq:AMP1}\\
x^{t+1}&=\eta_t(x^t+A^*z^t)\label{eq:AMP2},
\end{align}
where vectors with negative iteration indices are all-zero vectors, $A^*$ denotes
the transpose of a matrix $A$, $\eta_t:\mathbb{R}\rightarrow\mathbb{R}$ is a Lipschitz function with weak derivative $\eta_t'$, 
for any $u\in\mathbb{R}^N$, $[u]_i$ denotes its $i$th entry. The function $\eta_t$ acts coordinate-wise when applied to vectors. That is,
the vector $(\eta_t(u_1),\eta_t(u_2),...,\eta_t(u_N))$ is denoted by $\eta_t(u)$.

Under the assumptions on the measurement matrix $A$, the signal $x$, the measurement noise $w$, and the denoising function $\eta_t(\cdot)$ as listed in \cite[Section 1.1]{RushV16}, 
the sequence of the estimates $\{x^t\}$ that is generated by  AMP \eqref{eq:AMP1} \eqref{eq:AMP2}
has the following property~\cite{RushV16}. For all $\epsilon\in (0,1)$, there exist constants $K_t,\kappa_t>0$ independent of $n$ or $\epsilon$, such that
\begin{equation}
P\left(\left\lvert \frac{1}{N}\sum_{i=1}^{N}\phi(x^{t+1}_i,x_i) - \mathbb{E}\left[\phi(\eta_t(X+\tau^tZ),X)\right] \right\vert \geq \epsilon \right) \leq K_t e^{-\kappa_t n \epsilon^2},
\label{eq:thm_AMP_finte}
\end{equation}
where $\phi:\mathbb{R}^2\rightarrow\mathbb{R}$ is a pseudo-Lipschitz function of order 2 (PL(2)),\footnote{Recall the definition of PL(2) from \cite{Bayati2011}: a function $f:\mathbb{R}^m\rightarrow\mathbb{R}$ is said to be PL(2) if there is $L>0$ such that $|f(x)-f(y)|\leq L(1+\|x\|+\|y\|)\|x-y\|,\:\forall x,y\in\mathbb{R}^m$, where $\|\cdot\|$ denotes the Euclidean norm.} $X\sim p_X$, $Z$ is a standard normal random variable that is independent of $X$, and $\tau^t$ is defined via the following recursion ($(\sigma^0)^2=\delta^{-1}\mathbb{E}[X^2]$, $\delta=n/N$):
\begin{align}
(\tau^t)^2 &=\sigma_W^2 + (\sigma^t)^2,\nonumber\\
(\sigma^{t+1})^2 &=\delta^{-1}\mathbb{E}\left[\left(\eta_t(X+\tau^{t}Z)-X\right)^2\right].\label{eq:SP_SE}
\end{align} 

Notice that \eqref{eq:thm_AMP_finte} implies, by applying the Borel-Cantelli Lemma, the almost sure convergence result proved in \cite{Bayati2011}:
\begin{equation}
\lim_{N\rightarrow\infty}\frac{1}{N}\sum_{i=1}^{N}\phi(x^{t+1}_i,x_i)\overset{a.s.}{=}\mathbb{E}\left[\phi(\eta_t(X+\tau^tZ),X)\right].
\label{eq:thm_AMP}
\end{equation}

If we choose $\phi(x,y)=(x-y)^2$, then (\ref{eq:thm_AMP}) characterizes the mean square error (MSE)
achieved by AMP at each iteration.

\subsection{Column-Wise Multiprocessor AMP}
\label{subsec:MP_AMP}

In our proposed column-wise multiprocessor AMP (C-MP-AMP) algorithm,
the fusion center collects vectors that represent the estimations of the portion of the measurement vector $y$
contributed by the data from individual processors according to a pre-defined communication schedule.
The sum of these vectors is computed in the fusion center and transmitted to all processors.
Each processor performs standard AMP iterations with a new equivalent measurement vector, which
is computed using the vector received from the fusion center.
The pseudocode for C-MP-AMP is presented in Algorithm~1.

\begin{algorithm}
\caption{C-MP-AMP}
\textbf{Inputs to Processor $p$:} $y$, $A_p$, $\{\hat{k}_s\}_{s=0,...,\hat{s}}$ (maximum number of inner iterations at each outer iteration).\\
\textbf{Initialization:}  $x_p^{0,\hat{k}_0}=0$, $z_p^{0,\hat{k}_0-1}=0$, $r_p^{0,\hat{k}_0}=0$, $\forall p$.
\begin{algorithmic}
\NFor{$s=1:\hat{s}$} (loop over outer iterations)\\
$\quad$At the fusion center:
$\quad g^{s}=\sum_{u=1}^{P}r_u^{s-1,\hat{k}_{s-1}}$\\
$\quad$At Processor $p$:\\
$\quad x_p^{s,0}=x_p^{s-1,\hat{k}_{s-1}}$, $r_p^{s,0}=r_p^{s-1,\hat{k}_{s-1}}$
\NFor{$t=0:\hat{t}_s-1$} (loop over inner iterations)\\
$\quad\quad z_p^{s,k}=y-g_s-\left(r_p^{s,k}-r_p^{s,0}\right)$\\
$\quad\quad x_p^{s,k+1}=\eta_{s,k}(x_p^{s,k}+A_p^*z_p^{s,k})$\\
$\quad\quad r_p^{s,k+1}=A_px^{s,k+1}-\frac{z_p^{s,k}}{n}\sum_{i=1}^{N_p}\eta_{s,k}'([x_p^{s,k}+A_p^*z_p^{s,k}]_i)$.
\end{algorithmic}
\textbf{Output from Processor $p$:} $x_p^{\hat{s},\hat{k}_{\hat{s}}}$.
\label{algo:mp_amp_algo}
\end{algorithm}

\subsection{Performance Guarantee}
\label{subsec:SE}
Similar to AMP, the dynamics of the C-MP-AMP algorithm can be characterized by an SE formula. 
Let $(\sigma_p^{0,\hat{k}_0})^2=\delta_p^{-1}\mathbb{E}[X^2]$, where $\delta_p=n/N_p$, $\forall p=1,...,P$. 
For outer iterations $1\leq s\leq \hat{s}$ and inner iterations $0\leq t\leq\hat{k}_s$, we define the sequences $\{(\sigma_p^{s,k})^2\}$ and $\{(\tau_p^{s,k})^2\}$ as
\begin{align}
(\sigma_p^{s,0})^2&=(\sigma_p^{s-1,\hat{k}_s})^2,\label{eq:MP_SE1}\\
(\tau_p^{s,k})^2 &=\sigma_W^2+\sum_{u\neq p} (\sigma_u^{s,0})^2 + (\sigma_p^{s,k})^2,\label{eq:MP_SE2}\\
(\sigma_p^{s,k+1})^2 &=\delta_p^{-1}\mathbb{E}\left[\left(\eta_{s,k}(X+\tau_p^{s,k}Z)-X\right)^2 \right],\label{eq:MP_SE3}
\end{align}
where $Z$ is a standard normal random variable that is independent of $X$. 

With these definitions, we have the following performance guarantee for C-MP-AMP.
\begin{thm}
\label{thm:mainresult}
Under the assumptions listed in \cite[Section 1.1]{RushV16}, let $P$ be a fixed integer, for $p=1,...,P$, let $n/N_p=\delta_p\in (0,\infty)$ be a constant. Define $N=\sum_{p=1}^P N_p$. Then for any PL(2) function $\phi:\mathbb{R}^2\rightarrow\mathbb{R}$, we have $\forall \e \in (0,1)$, there exist constants $K_{s,k},\kappa_{s,k}>0$ independent of $n,\e$, such that
\begin{equation*}
P\left(\abs{\frac{1}{N_p}\sum_{i=1}^{N_p}\phi(x_{p,i}^{s,k+1},x_{p,i})-\mathbb{E}\left[\phi(\eta_{s,k}(X+\tau_p^{s,k}Z),X)\right]} \geq \e \right)\leq K_{s,k} e^{-\kappa_{s,k} n \e^2},\forall p,
\end{equation*}
where $x_p^{s,k+1}$ is generated by Algorithm \ref{algo:mp_amp_algo}, $\tau_p^{s,k}$ is defined in (\ref{eq:MP_SE1}--\ref{eq:MP_SE3}), $X\sim p_X$, and $Z$ is a standard normal random variable that is independent of $X$.
\end{thm}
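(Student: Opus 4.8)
The plan is to follow the conditioning-based framework used for the finite-sample analysis of standard AMP in \cite{RushV16}, adapting it to the block structure induced by column-wise partitioning. The central observation is that, within a fixed outer iteration $s$, each processor runs an ordinary AMP recursion: substituting $g_s=\sum_u r_u^{s-1,\hat{k}_{s-1}}$ and $r_p^{s,0}=r_p^{s-1,\hat{k}_{s-1}}$ into the update $z_p^{s,k}=y-g_s-(r_p^{s,k}-r_p^{s,0})$ shows that processor $p$ effectively observes $\tilde{y}_p=y-\sum_{u\neq p}r_u^{s-1,\hat{k}_{s-1}}$ and then performs AMP with submatrix $A_p$ against this modified measurement. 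Since the entries of $A$ are i.i.d.\ Gaussian and the submatrices $A_1,\dots,A_P$ are disjoint column blocks, the $\{A_p\}$ are mutually independent. The first step is therefore to establish, in Lemma~\ref{lem:Ap_cond}, the conditional distribution of each $A_p$ given the sigma-algebra generated by all iterates produced up to the current step. As in the single-processor case, this conditional law is Gaussian with a mean determined by the linear constraints already imposed on $A_p$, plus an independent Gaussian component supported on the orthogonal complement; the new feature is that these constraints couple quantities across processors through the fusion-center sums $g^s$ and the aggregated residuals $r_p^{s,k}$.

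With Lemma~\ref{lem:Ap_cond} in hand, I would prove the conditional-distribution lemma (Lemma~\ref{lem:cond_htbt}), which represents the centered versions of $A_p x_p^{s,k+1}$ and $A_p^* z_p^{s,k}$ (the quantities $b_p^{s,k}$ and $h_p^{s,k}$ analogous to those in \cite{RushV16}) as a deterministic part plus an effective Gaussian term whose variance is dictated by the state-evolution recursion (\ref{eq:MP_SE1}--\ref{eq:MP_SE3}). I expect this step to go through almost verbatim as in \cite{RushV16} once Lemma~\ref{lem:Ap_cond} supplies the per-block conditional law, because the per-processor recursion is a genuine AMP iteration. The essential bookkeeping is that, at the start of outer iteration $s$, the residuals contributed by the other processors are frozen inside $\tilde{y}_p$, so their effect enters the effective noise of processor $p$ as an additional \emph{independent} Gaussian contribution of variance $\sum_{u\neq p}(\sigma_u^{s,0})^2$ --- precisely the cross term appearing in (\ref{eq:MP_SE2}).

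The heart of the argument is Lemma~\ref{lem:main_lemma}, a collection of concentration statements --- covering inner products, norms, and PL(2) functions of the iterates --- proved jointly by induction on the pair $(s,k)$. Using the Gaussian-plus-deterministic representation from Lemma~\ref{lem:cond_htbt}, each empirical average is shown to concentrate around its state-evolution value at the rate $K_{s,k}e^{-\kappa_{s,k}n\e^2}$ by combining the standard toolkit: sub-Gaussian tail bounds (exploiting the sub-Gaussian prior on $X$), Lipschitz concentration for functions of Gaussian vectors, and the stability of such bounds under sums and products. Theorem~\ref{thm:mainresult} then follows by specializing the PL(2) concentration statement in Lemma~\ref{lem:main_lemma} to the output iterate $x_p^{s,k+1}$, with the variance parameter $\tau_p^{s,k}$ supplied by the SE.

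The principal obstacle is controlling the cross-processor coupling introduced by the fusion center. Unlike single-processor AMP, the conditioning sigma-algebra mixes the linear constraints imposed on every block, so one must verify that the effective noise seen at processor $p$ decomposes cleanly into its own component plus the frozen components $\sum_{u\neq p}(\sigma_u^{s,0})^2$ of the others, with no spurious correlation surviving. Establishing that these cross terms concentrate, and that the nested induction over the outer index $s$ and inner index $k$ closes with constants $K_{s,k},\kappa_{s,k}$ independent of $n$ and $\e$, is where the bulk of the technical effort lies; the remaining manipulations, modulo careful accounting of the block structure, run parallel to \cite{RushV16}.
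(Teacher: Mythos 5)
Your proposal follows essentially the same route as the paper: establish the conditional law of the independent column blocks $A_p$ given the iterates (Lemma~\ref{lem:Ap_cond}), derive the Gaussian-plus-deviation representation of the iterates (Lemma~\ref{lem:cond_htbt}), close an induction of concentration statements (Lemma~\ref{lem:main_lemma}), and specialize the PL(2) part to the output, with the cross-processor coupling handled exactly as you describe --- the frozen residuals from the other blocks entering as an independent variance contribution $\sum_{u\neq p}(\sigma_u^{s,0})^2$ tracked via the $\theta(t)$ synchronization index. The only cosmetic difference is that the paper flattens the double index $(s,k)$ into a single iteration counter $t=s\hat{k}+k$ before running the induction.
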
 

{\em Remark 1: C-MP-AMP converges to a fixed point that is no worse than that of AMP.} This statement can be demonstrated as follows.
When C-MP-AMP converges, the quantities in (\ref{eq:MP_SE1}--\ref{eq:MP_SE3}) do not keep changing, hence we can drop all the iteration indices for fixed point analysis. 
Notice that the last term on the right hand side (RHS) of (\ref{eq:MP_SE2}) vanishes, which leaves the RHS independent of $p$. 
That is, $(\tau_p^{s,k})^2$ are equal for all $p$, hence we can further drop the processor index for $(\tau_p^{s,k})^2$.
Denote $(\tau_p^{s,k})^2$ by $\tau^2$ for all $s,k,p$, and plug (\ref{eq:MP_SE3}) into (\ref{eq:MP_SE2}), then
\begin{align*}
\tau^2&=\sigma_W^2+\sum_{p=1}^{P}\delta_p^{-1}\mathbb{E}\left[\left(\eta(X+\tau Z)-X\right)^2\right]\\
&\overset{(a)}{=}\sigma_W^2+\delta^{-1}\mathbb{E}\left[\left(\eta(X+\tau Z)-X\right)^2\right],
\end{align*}
which is identical to the fixed point equation obtained from (\ref{eq:SP_SE}). In the above, step (a) holds because $\sum_{p=1}^P \delta_p^{-1}=\sum_{p=1}^P \frac{N_p}{n} = \frac{N}{n}.$
Because AMP always converges to the worst fixed point of the fixed point equation (\ref{eq:SP_SE})~\cite{Krzakala2012probabilistic}, 
the average asymptotic performance of C-MP-AMP is identical to AMP when there is only one solution to the
fixed point equation, and at least as good as AMP in case of multiple fixed points.

{\em Remark 2: The asymptotic dynamics of C-MP-AMP can be identical to AMP with a specific communication schedule.} This can be achieved by letting
$\hat{k}_s=1,\forall s$. In this case, the quantity $(\tau_p^{s,k})$ is involved only for $t=0$. 
Because the last term in (\ref{eq:MP_SE2}) is 0 when $t=0$, the computation
of $(\tau_p^{s,0})^2$ is independent of $p$. 
Therefore, $\tau_p^{s,0}$ are again equal for all $p$.
Dropping the processor index for $(\tau_p^{s,k})^2$, the recursion in (\ref{eq:MP_SE1}--\ref{eq:MP_SE3}) can be simplified as 
\begin{align*}
(\tau^{s,0})^2&=\sigma_W^2+\sum_{p=1}^P\delta_p^{-1}\mathbb{E}\left[\left(\eta_{s,0}(X+\tau^{s,0}Z)-X\right)^2\right]\\
&=\sigma_W^2+\delta^{-1}\mathbb{E}\left[\left(\eta_{s-1,0}(X+\tau^{s-1,0}Z)-X\right)^2\right],
\end{align*}
where the iteration evolves over $s$, which is identical to (\ref{eq:SP_SE}) evolving over $t$.

{\em Remark 3: Theorem \ref{thm:mainresult} implies almost sure convergence.} Similar to the performance guarantee for AMP \cite{RushV16}, the concentration result in Theorem \ref{thm:mainresult} implies
\begin{equation*}
\lim_{N\rightarrow\infty}\frac{1}{N_p}\sum_{i=1}^{N_p}\phi(x_{p,i}^{s,k+1},x_{p,i})\overset{a.s.}{=}\mathbb{E}\left[\phi(\eta_{s,k}(X+\tau_p^{s,k}Z),X)\right],\forall p,
\end{equation*}
by the Borel-Cantelli Lemma.

\section{Proofs of Theorem \ref{thm:mainresult}}
\label{sec:proof}

Our proof follows closely from the proof for AMP in \cite{RushV16}, with additional dependence structure to be addressed due to vectors being transmitted among processors. 

\subsection{Proof Notations}
\label{subsec:def}
Without loss of generality, we assume the sequence $\{\hat{k}_s\}_{s\geq 0}$ in Algorithm \ref{algo:mp_amp_algo} to be a constant value $\hat{k}$.
Let $t = s\hat{k} + k$, $\theta(t) = \lfloor t/\hat{k} \rfloor \hat{k}$. Given $w\in\mathbb{R}^n$, $x_p\in\mathbb{R}^{N_p}$, for $p=1,...,P$, define the column vectors $h_p^{t+1},q_p^{t}\in\mathbb{R}^{N_p}$ and $b_p^t,m_p^t\in\mathbb{R}^n$ for $t\geq 0$ recursively as follows. Starting with initial condition $q_p^0\in\mathbb{R}^{N_p}$: 
\begin{align}
&h_p^{t+1} = A_p^* m_p^t -  q_p^t,\qquad q_p^t = f_t(h_p^t,x_p)\nonumber\\
&b_p^t = A_p q_p^t-\lambda_p^t m_p^{t-1},\qquad m_p^t = b_p^t +\sum_{u\neq p} b_u^{\theta(t)}-w 
%m_p^t = b_p^t - w + \sum_{k\neq p} b_k^{\theta(t)}
\label{eq:def_htbt}
\end{align}
where 
\begin{equation}
f_t(h^t,x_p) = \eta_{t-1}(x_p - h^t) - x_p,\qquad\text{and}\qquad\lambda_p^t:=\frac{1}{\delta_p N_p}\sum_{i=1}^{N_p} f_t'(h_{p,i}^t,x_{p,i}).
\label{eq:def_lambda_ft}
\end{equation}
In \eqref{eq:def_lambda_ft}, the derivative of $f_t:\mathbb{R}^2\rightarrow\mathbb{R}$ is with respect to the first argument. 
We assume that $\eta_t$ is Lipschitz for all $t\geq 0$, then it follows that $f_t$ is Lipschitz for all $t\geq 0$. Consequently, the weak derivative and $f_t'$ exit. Further, $f_t'$ is assumed to be differentiable, except possibly at a finite number of points, with bounded derivative whenever it exits. In \eqref{eq:def_htbt}, quantities with negative indices or with index $\theta(t)=0$ (i.e., $t<\hat{k}$) are defined to be zeros.

To see the equivalence between Algorithm \ref{algo:mp_amp_algo} and the recursion defined in \eqref{eq:def_htbt} and \eqref{eq:def_lambda_ft}, we let $x_p^0=0$, $r_p^0=0$, $z_p^t=0$, and 
\begin{align*}
h_p^{t+1} &= x_p - (A^* z_p^t + x_p^t),\qquad   q_p^t  = x_p^t - x_p,\\
b_p^t  &= r_p^t - A_p x_p,\qquad m_p^t  = -z_p^t.
\end{align*}

Let $(\sigma_p^0)^2 = \delta_p^{-1} \mathbb{E}[X^2]$. We assume that $(\sigma_p^0)^2$ is strictly positive for all $p=1,...,P$ and for all $\e \in (0,1)$, there exist $K,\kappa>0$ such that
\begin{equation}
\label{eq:q0_assume}
P\left(\abs{\frac{\|q_p^0\|^2}{n} - (\sigma_p^0)^2} \geq \e\right) \leq K e^{-\kappa n \e^2}, \quad\forall p = 1,...,P.
\end{equation}
Define the state evolution scalars $\{\tau_p^t\}_{t\geq 0}$ and $\{\sigma_p^t\}_{t\geq 1}$ for the the recursion defined in \eqref{eq:def_htbt} as follows:
\begin{equation}
\label{eq:se_general}
(\tau_p^t)^2 = (\sigma_p^t)^2 + \sum_{u\neq q} (\sigma_u^{\theta(t)})^2 + \sigma_W^2,\qquad (\sigma_p^t)^2 = \frac{1}{\delta_p}\mathbb{E}\left[\left(f_t(\tau_p^{t-1}Z,X)\right)^2  \right],
%(\tau_p^t)^2 = \sigma_W^2 + \sum_{u\neq p} (\sigma_u^{\theta(t)})^2 + (\sigma_p^{t})^2,\qquad (\sigma_p^t)^2 = \frac{1}{\delta_p}\mathbb{E}\left[\left(f_t(\tau_p^{t-1}Z,X)\right)^2  \right],
\end{equation}
where $Z\sim\mc{N}(0,1)$ and $X\sim p_X$ are independent. Notice that with the equivalence between Algorithm \ref{algo:mp_amp_algo} and the recursion \ref{eq:def_htbt}, the state evolution scalars defined in \eqref{eq:se_general} matches \eqref{eq:MP_SE1} - \eqref{eq:MP_SE3}.

Writing the updating equations for $b_p^t, h_p^{t+1}$ defined in \eqref{eq:def_htbt} in matrix form, we have
\begin{equation}
X_p^t = A_p^* M_p^t,\quad Y_p^t = A_p Q_p^t, \label{eq:update_mtx_form}
\end{equation}
where
\begin{align*}
X_p^t &= [ h_p^1 +  q_p^0 | h_p^2 +  q_p^1 |\cdots| h_p^t +  q_p^{t-1} ],\qquad 
Y_p^t = [ b_p^0 | b_p^1 + \lambda_p^1 m_p^{0} | \cdots | b_p^{t-1} + \lambda_p^{t-1} m_p^{t-2}]\\
M_p^t &=[ m_p^0 | m_p^1 | \cdots |m_p^{t-1}],\qquad \qquad
Q_p^t=[ q_p^0 | q_p^1 | \cdots| q_p^{t-1}].
\end{align*}

Let $(m_p^t)_{||}$ and $(q_p^t)_{||}$ denote the projection of $m_p^t$ and $q_p^t$ onto the column space of $M_p^t$ and $Q_p^t$, respectively.
That is,
\begin{align*}
(m_p^t)_{||}&=M_p^t\left((M_p^t)^*M_p^t\right)^{-1}(M_p^t)^*m_p^t\\
(q_p^t)_{||}&=Q_p^t\left((Q_p^t)^*Q_p^t\right)^{-1}(Q_p^t)^*q_p^t.
\end{align*}
Let
\begin{equation}
\alpha_p^t=(\alpha_{p,0}^t,\alpha_{p,1}^t,...,\alpha_{p,t-1}^t)^*,\quad \gamma_p^t=(\gamma_{p,0}^t,\gamma_{p,1}^t,...,\gamma_{p,t-1}^t)^*
\end{equation}
be the coefficient vectors of these projections. That is,
\begin{equation}
\alpha_p^t = \left((M_p^t)^*M_p^t\right)^{-1}(M_p^t)^*m_p^t,\quad
\gamma_p^t = \left((Q_p^t)^*Q_p^t\right)^{-1}(Q_p^t)^*q_p^t.
\end{equation}
and
\begin{equation}
(m_p^t)_{||}=\sum_{i=0}^{t-1}\alpha_{p,i}^tm_p^i,\quad
(q_p^t)_{||}=\sum_{i=0}^{t-1}\gamma_{p,i}^tq_p^i.
\end{equation}
Define
\begin{equation}
(m_p^t)_{\perp}=m_p^t-(m_p^t)_{||},\quad
(q_p^t)_{\perp}=q_p^t-(q_p^t)_{||}.
\end{equation}

The main lemma will show that $\alpha_p^t$ and $\gamma_p^t$ concentrate around some constant $\hat{\alpha}_p^t$ and $\hat{\gamma}_p^t$, respectively. We define these constants in the following subsection. 

\subsection{Concentrating Constants}
\label{subsec:conc_const}

Let $\{\tilde{Z}_p^t\}_{t\geq 0}$ and $\{\breve{Z}_p^t\}_{t\geq 0}$ each be a sequence of zero-mean jointly Gaussian random variables whose covariance is defined recursively as follows. For $t,r\geq 0$,
\begin{equation}
\mathbb{E}[\breve{Z}_p^r\breve{Z}_p^t] =\frac{\tilde{E}_p^{r,t}}{\sigma_p^r \sigma_p^t},\qquad \mathbb{E}[\tilde{Z}_p^r\tilde{Z}_p^t] =\frac{\breve{E}_p^{r,t}}{\tau_p^r\tau_p^t},
\label{eq:def_Z_tilde}
\end{equation}
where
\begin{align}
\breve{E}_p^{r,t}&= \tilde{E}_p^{r,t} + \sum_{u\neq p} \tilde{E}_u^{\theta(r),\theta(t)} + \sigma_W^2,\nonumber\\
\tilde{E}_p^{r,t}&=\delta_p^{-1}\mathbb{E}\left[f_r(\tau_p^{r-1}\tilde{Z}_p^{r-1},X)f_t(\tau_p^{t-1}\tilde{Z}_p^{t-1},X)\right].
\label{eq:def_tilde_E}
\end{align}
Moreover, $\tilde{Z}_p^r$ is independent of $\tilde{Z}_q^t$ and $\breve{Z}_p^r$ is independent of $\breve{Z}_q^t$ for all $r,t\geq 0$ whenever $p\neq q$.
Note that according to the definition of $\sigma_p^t$ and $\tau_p^t$ in (\ref{eq:se_general}), we have $\breve{E}_p^{t,t}=(\tau_p^t)^2$, $\tilde{E}_p^{t,t}=(\sigma_p^t)^2$, and $\mathbb{E}[(\tilde{Z}_p^t)^2]=\mathbb{E}[(\breve{Z}_p^t)^2]=1$. In \eqref{eq:def_tilde_E}, quantities with negative indices or with either $\theta(t)=0$ or $\theta(r)=0$ are zeros.

Define matrices $\tilde{C}_p^t,\breve{C}_p^t\in\mathbb{R}^{t\times t}, p=1,2$, such that
\begin{equation*}
[\tilde{C}_p^t]_{r+1,s+1} =\tilde{E}_p^{r,s}, \quad
[\breve{C}_p^t]_{r+1,s+1} = \breve{E}_p^{r,s}, \forall r,s=0,...,t-1.
\end{equation*}

Define vectors $\tilde{E}_p^t,\breve{E}_p^t\in\mathbb{R}^t, p=1,2$, such that
\begin{equation*}
\tilde{E}_p^t=(\tilde{E}_p^{0,t},\tilde{E}_p^{1,t},...,\tilde{E}_p^{t-1,t}),\quad
\breve{E}_p^t=(\breve{E}_p^{0,t},\breve{E}_p^{1,t},...,\breve{E}_p^{t-1,t}).
\end{equation*}

Define the concentrating values $\hat{\alpha}_p^t$ and $\hat{\gamma}_p^t$ as
\begin{equation}
\hat{\gamma}_p^t=(\tilde{C}_p^t)^{-1}\tilde{E}_p^t,\quad \hat{\alpha}_p^t=(\breve{C}_p^t)^{-1}\breve{E}_p^t.
\label{eq:def_hat_gamma_alpha}
\end{equation}

Let $(\sigma_p^0)_{\perp}^2=(\sigma_p^0)^2$ and $(\tau_p^0)_{\perp}^2=(\tau_p^0)^2$, and for $t>0$, define
\begin{align}
(\sigma_p^t)_{\perp}^2 &= (\sigma_p^t)^2-(\hat{\gamma}_p^t)^*\tilde{E}_p^t=(\sigma_p^t)^2-(\tilde{E}_p^t)^*(\tilde{C}_p^t)^{-1}\tilde{E}_p^t,\nonumber\\
(\tau_p^t)_{\perp}^2 &= (\tau_p^t)^2-(\hat{\alpha}_p^t)^*\breve{E}_p^t=(\sigma_p^t)^2-(\breve{E}_p^t)^*(\breve{C}_p^t)^{-1}\breve{E}_p^t.
\label{eq:def_sigma_tau_perp}
\end{align}

\begin{lem}
\label{lemma:invertible}
The matrices $\tilde{C}_p^t$ and $\breve{C}_p^t$, $\forall t\geq 0$, defined above are invertible, and the scalars $(\sigma_p^t)^2_{\perp}$ and $(\tau_p^t)_{\perp}^2$, $\forall t\geq 0$, defined above are strictly positive. 
\end{lem}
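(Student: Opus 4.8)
The plan is to argue by induction on $t$, exploiting the fact that, by \eqref{eq:def_Z_tilde}, the matrices $\tilde C_p^t$ and $\breve C_p^t$ are exactly the covariance matrices of the jointly Gaussian families $\{\sigma_p^r \breve Z_p^r\}_{r=0}^{t-1}$ and $\{\tau_p^r \tilde Z_p^r\}_{r=0}^{t-1}$, respectively. As Gram matrices they are automatically positive semidefinite, so invertibility is equivalent to positive definiteness, and by Sylvester's criterion it suffices to keep all leading principal minors positive. The border-and-corner structure $\tilde C_p^{t+1}=\left(\begin{smallmatrix}\tilde C_p^t & \tilde E_p^t\\ (\tilde E_p^t)^* & (\sigma_p^t)^2\end{smallmatrix}\right)$, combined with \eqref{eq:def_sigma_tau_perp}, yields the Schur-complement determinant identity $\det \tilde C_p^{t+1}=(\det \tilde C_p^t)\,(\sigma_p^t)_{\perp}^2$, and similarly $\det \breve C_p^{t+1}=(\det \breve C_p^t)\,(\tau_p^t)_{\perp}^2$. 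Hence the whole lemma reduces, through the induction, to showing that the two innovation variances $(\sigma_p^t)_{\perp}^2$ and $(\tau_p^t)_{\perp}^2$ are strictly positive at each step; note that each of these is precisely the residual $L^2$-variance of $\sigma_p^t \breve Z_p^t$ (resp.\ $\tau_p^t \tilde Z_p^t$) after projecting onto the span of the earlier members of its family.

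The ingredient specific to the multiprocessor setting is the inequality $(\tau_p^t)_{\perp}^2 \geq (\sigma_p^t)_{\perp}^2$, expressing that inter-processor coupling can only increase a residual. To establish it I would use the distributional realization dictated by \eqref{eq:def_tilde_E}: one may write $\tau_p^t \tilde Z_p^t \overset{d}{=} \sigma_p^t \breve Z_p^t + \sum_{u\neq p}\sigma_u^{\theta(t)}\breve Z_u^{\theta(t)} + \sigma_W W_0$, where the within-processor family $\{\sigma_p^r \breve Z_p^r\}$, the cross-processor families $\{\sigma_u^{\theta(r)}\breve Z_u^{\theta(r)}\}_{u\neq p}$, and a fresh noise variable $W_0\sim\mathcal{N}(0,1)$ are mutually independent; a direct covariance check against $\breve E_p^{r,t}=\tilde E_p^{r,t}+\sum_{u\neq p}\tilde E_u^{\theta(r),\theta(t)}+\sigma_W^2$ confirms that this reproduces $\breve C_p^t$. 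Since the within-processor part is $L^2$-orthogonal to the cross-processor and noise parts, and since projecting onto the larger subspace $\mathrm{span}\{\sigma_p^r\breve Z_p^r\}_{r<t}\oplus\mathrm{span}\{(\text{cross}+\text{noise})_r\}_{r<t}$ can only shrink a residual, the residual of $\tau_p^t\tilde Z_p^t$ splits into an orthogonal sum and is bounded below by the residual of its within-processor component, which is exactly $(\sigma_p^t)_{\perp}^2$. The frozen index $\theta(\cdot)$ merely makes the cross-processor span smaller and does not affect this bound, so only $(\sigma_p^t)_{\perp}^2>0$ remains.

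Establishing $(\sigma_p^t)_{\perp}^2 > 0$ is the main obstacle. This quantity equals (up to the factor $\delta_p^{-1}$) the squared $L^2$-distance from $f_t(\tau_p^{t-1}\tilde Z_p^{t-1},X)$ to $\mathrm{span}\{f_r(\tau_p^{r-1}\tilde Z_p^{r-1},X)\}_{r<t}$, so its positivity is equivalent to the linear independence in $L^2$ of the denoiser outputs $\{f_r(\tau_p^{r-1}\tilde Z_p^{r-1},X)\}_{r\leq t}$ — a genuine nondegeneracy property of the nonlinearity $\eta$ that cannot be extracted from the Gaussian orthogonality bookkeeping of the previous paragraph. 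I would dispatch it by noting that the $\tilde E_p$-recursion \eqref{eq:def_tilde_E} for a single processor $p$ is formally identical to the single-processor AMP state-evolution recursion, with $\delta$ replaced by $\delta_p$ and the Gaussian input variance replaced by $(\tau_p^{t-1})^2$, which is strictly positive by the inductive hypothesis since $(\tau_p^{t-1})^2 \geq \sigma_W^2+(\sigma_p^{t-1})^2$. Consequently the strict-positivity argument for the analogous conditional variance in \cite{RushV16}, resting on the assumptions of \cite[Section 1.1]{RushV16} on $\eta_t$ and $p_X$, transfers verbatim to each processor and gives $(\sigma_p^t)_{\perp}^2>0$. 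The base case $t=1$ is then immediate, since $\tilde C_p^1=[(\sigma_p^0)^2]$ and $\breve C_p^1=[(\sigma_p^0)^2+\sigma_W^2]$ are strictly positive because $(\sigma_p^0)^2=\delta_p^{-1}\mathbb{E}[X^2]>0$ is assumed in \eqref{eq:q0_assume}.
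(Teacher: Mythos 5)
Your argument is correct, and for the $\tilde{C}_p^t$ part it coincides with the paper's (both simply import the single-processor positivity argument of \cite{RushV16}, which is legitimate since, as you note, the within-processor recursion for $\tilde{E}_p^{r,t}$ has exactly the single-processor form with $\delta_p$ in place of $\delta$). Where you diverge is the new, multiprocessor-specific part concerning $\breve{C}_p^t$ and $(\tau_p^t)_\perp^2$. The paper handles this in one line at the matrix level: since $\breve{E}_p^{r,t}=\tilde{E}_p^{r,t}+\sum_{u\neq p}\tilde{E}_u^{\theta(r),\theta(t)}+\sigma_W^2$, the matrix $\breve{C}_p^{t+1}$ is the sum of the positive definite $\tilde{C}_p^{t+1}$ and $P$ positive semidefinite matrices (the frozen-index cross-processor Gram matrices and $\sigma_W^2\mathbf{1}\mathbf{1}^*$), hence positive definite, and the Schur determinant identity then forces $(\tau_p^t)_\perp^2>0$. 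You instead realize the same decomposition at the level of the Gaussian random variables, split the residual of $\tau_p^t\tilde{Z}_p^t$ orthogonally into a within-processor piece and a cross-processor-plus-noise piece, and conclude $(\tau_p^t)_\perp^2\geq(\sigma_p^t)_\perp^2>0$. The two arguments rest on the identical structural fact (independence of the cross-processor and noise contributions is precisely what makes the added matrices PSD), so neither is more general; the paper's version is shorter and avoids the projection bookkeeping, while yours buys the explicit quantitative comparison $(\tau_p^t)_\perp^2\geq(\sigma_p^t)_\perp^2$, which the paper does not state and which has some independent interpretive value (the conditional variance seen by each processor is never smaller than in the isolated single-processor recursion). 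Your base case and the reduction of invertibility to positivity of the innovation variances via Sylvester/Schur are also consistent with the paper's (implicit) induction.
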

\begin{proof}
The proof for $\tilde{C}_p^t$ being invertible and $(\sigma_p^t)^2_{\perp}$ being strictly positive is the same as in \cite{RushV16}. 
Now consider $\breve{C}_p^{t+1}$. Notice that $\breve{C}_p^{t+1}$ is the sum of a positive definite matrix ($\tilde{C}_p^{t+1}$) and $P$ positive semi-definite matrices, hence,
$\breve{C}_p^{t+1}$ is positive definite.
Consequently, 
\begin{equation}
\det(\breve{C}_p^{t+1})=\det(\breve{C}_p^t)\det((\tau_p^t)^2-(\breve{E}_p^t)^*(\breve{C}_p^t)^{-1}\breve{E}_p^t)>0,
\end{equation}
which implies $(\tau_p^t)^2-(\breve{E}_p^t)^*(\breve{C}_p^t)^{-1}\breve{E}_p^t=(\tau_p^t)^2_\perp>0$.
\end{proof}

\subsection{Condition Distribution Lemma}
\label{subsec:cond_lemma}
Let the sigma algebra $\mathscr{S}^{t_1,t}$ be generated by $x,w,b_p^0,...,b_p^{t_1-1},m_p^{0},...,m_p^{t_1-1},h_p^1,...,h_p^{t},q_p^0,...,q_p^{t}$, $\forall p$.
We now compute the conditional distribution of $A_p$ given $\mathscr{S}^{t_1,t}$ for $1 \leq p \leq P$, where $t_1$ is either $t$ or $t+1$.

Notice that conditioning on $\mathscr{S}^{t_1,t}$ is equivalent to conditioning on the linear constraints:
\begin{equation}
A_p Q_p^{t_1} = Y_p^{t_1},\quad A_p^* M_p^t = X_p^t,\qquad 1 \leq p \leq P,
\label{eq:linearConstraints}
\end{equation}
where in \eqref{eq:linearConstraints}, only $A_p$, $1 \leq p \leq P$, are treated as random.

Let $\mathsf{P}_{Q_p^{t_1}}^{\parallel}=Q_p^{t_1}((Q_p^{t_1})^*Q_p^{t_1})^{-1}Q_p^{t_1}$ and $\mathsf{P}_{M_p^t}^{\parallel}=M_p^t((M_p^t)^*M_p^t)^{-1}M_p^t$, which are the projectors onto the column space of $Q_p^{t_1}$ and $M_p^t$, respectively. The following lemma provides the conditional distribution of the matrices $A_p$, $p=1,...,P$, given $\mc{G}^{t_1,t}$.
\begin{lem}
\label{lem:Ap_cond}
For $t_1=t \text{ or } t+1$, the conditional distribution of the random matrices $A_p$, $p=1,...,P$, given $\mathscr{S}^{t_1,t}$ satisfies
\begin{equation*}
(A_1,...,A_P)|_{\mathscr{S}^{t_1,t}} \overset{d}{=} (\mathsf{E}_1^{t_1,t} + \mathsf{P}_{M_1^t}^{\perp} \tilde{A}_1 \mathsf{P}_{Q_1^{t_1}}^{\perp},...,\mathsf{E}_P^{t_1,t} + \mathsf{P}_{M_P^t}^{\perp} \tilde{A}_P \mathsf{P}_{Q_P^{t_1}}^{\perp})
\end{equation*}
where $\mathsf{P}_{Q_p^{t_1}}^{\perp} = \mathsf{I}-\mathsf{P}_{Q_p^{t_1}}^{\parallel}$ and $\mathsf{P}_{M_p^t}^{\perp} = \mathsf{I}-\mathsf{P}_{M_p^t}^{\parallel}$. $\tilde{A}_p\overset{d}{=}A_p$ and $\tilde{A}_p$ is independent of $\mathscr{S}^{t_1,t}$. Moreover, $\tilde{A}_p$ is independent of $\tilde{A}_q$ for $p\neq q$. $E_p^{t_1,t}$ is defined as
\begin{align*}
\mathsf{E}_p^{t_1,t} &= Y_p^{t_1}((Q_p^{t_1})^*Q_p^{t_1})^{-1}(Q_p^{t_1})^* +  M_p^t((M_p^t)^*M_p^t)^{-1}(X_p^t)^*\\
&\hspace{1in} - M_p^t((M_p^t)^*M_p^t)^{-1}(M_p^t)^*Y_p^{t_1}((Q_p^{t_1})^*Q_p^{t_1})^{-1}(Q_p^{t_1})^*.
\end{align*}
\end{lem}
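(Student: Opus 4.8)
The plan is to reduce the multiprocessor statement to $P$ essentially independent applications of the standard Gaussian conditioning lemma used for ordinary AMP in \cite{RushV16,Bayati2011}. The starting observation, already recorded before the lemma, is that conditioning on $\mathscr{S}^{t_1,t}$ is the same as conditioning on the linear constraints $A_p Q_p^{t_1}=Y_p^{t_1}$ and $A_p^* M_p^t=X_p^t$ for $1\le p\le P$. The essential point to establish is that, although the matrices $M_p^t$ couple the processors through $m_p^t=b_p^t+\sum_{u\neq p}b_u^{\theta(t)}-w$, each constraint in \eqref{eq:linearConstraints} still involves only the single matrix $A_p$: the matrices $Q_p^{t_1},M_p^t$ and the right-hand sides $Y_p^{t_1},X_p^t$ are all $\mathscr{S}^{t_1,t}$-measurable, so once we condition they are fixed, and the remaining randomness in $A_p$ sees them as deterministic coefficients. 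Thus the system of constraints block-diagonalizes over $p$.

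Given this decoupling, I would first treat a single processor. For i.i.d.\ Gaussian $A_p$ subject to the two linear constraints, the conditional law is again Gaussian. Its mean is any particular solution of the two constraints; the formula for $\mathsf{E}_p^{t_1,t}$ is exactly such a solution, and I would verify (routinely) that $\mathsf{E}_p^{t_1,t}Q_p^{t_1}=Y_p^{t_1}$ and $(\mathsf{E}_p^{t_1,t})^* M_p^t=X_p^t$, using the consistency identity $(X_p^t)^*Q_p^{t_1}=(M_p^t)^*Y_p^{t_1}$ that follows from $X_p^t=A_p^* M_p^t$ and $Y_p^{t_1}=A_p Q_p^{t_1}$. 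The conditional fluctuation is supported on the orthogonal complement of the constraint space; decomposing $A_p$ via the projectors $\mathsf{P}_{M_p^t}^{\parallel},\mathsf{P}_{Q_p^{t_1}}^{\parallel}$ and their complements, and using that orthogonal projections of a Gaussian matrix are independent, yields the representation $\mathsf{E}_p^{t_1,t}+\mathsf{P}_{M_p^t}^{\perp}\tilde{A}_p\mathsf{P}_{Q_p^{t_1}}^{\perp}$ with $\tilde{A}_p\overset{d}{=}A_p$ independent of $\mathscr{S}^{t_1,t}$. This is verbatim the single-matrix computation of \cite{RushV16}.

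Finally, I would assemble the joint statement. Because $A_1,\dots,A_P$ are independent a priori and, by the block-diagonal structure above, the constraint set of processor $p$ involves only $A_p$, the joint conditional distribution factorizes into the product of the $P$ per-processor conditional distributions. Hence the free copies $\tilde{A}_1,\dots,\tilde{A}_P$ may be taken mutually independent (each $\tilde{A}_p\overset{d}{=}A_p$), which is exactly the product form asserted in the lemma.

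The step I expect to be the main obstacle is the very first one: carefully justifying the equivalence between the sigma-algebra $\mathscr{S}^{t_1,t}$ and the linear constraints in the presence of the cross-processor terms $\sum_{u\neq p}b_u^{\theta(t)}$ that enter $m_p^t$. One must confirm that these terms are already $\mathscr{S}^{t_1,t}$-measurable at the iteration indices in question (this is where the definition $\theta(t)=\lfloor t/\hat{k}\rfloor\hat{k}$ and the inner/outer iteration bookkeeping matter), so that no hidden dependence on $A_q$, $q\neq p$, leaks into the constraint defining $A_p$. Once this measurability is pinned down, the remainder is an unchanged, per-processor invocation of the classical Gaussian conditioning argument.
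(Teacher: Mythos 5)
Your proposal is correct and follows essentially the same route as the paper: both reduce the conditioning on $\mathscr{S}^{t_1,t}$ to the per-processor linear constraints \eqref{eq:linearConstraints}, apply the standard single-matrix Gaussian conditioning argument of \cite{RushV16} to obtain $\mathsf{E}_p^{t_1,t}+\mathsf{P}_{M_p^t}^{\perp}\tilde{A}_p\mathsf{P}_{Q_p^{t_1}}^{\perp}$ for each $p$, and then factorize the joint law using the a priori independence of $A_1,\dots,A_P$ together with the fact that each constraint block involves only one $A_p$ (the paper packages this as a single covariance computation showing $A_p\mathsf{P}_{Q_p}^{\parallel}$ and $A_r\mathsf{P}_{Q_r}^{\perp}$ are uncorrelated for all pairs $p,r$). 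The measurability concern you flag is handled by the definition of $\mathscr{S}^{t_1,t}$, which includes $b_u^0,\dots,b_u^{t_1-1}$ for every $u$, so the cross-terms $b_u^{\theta(r)}$ with $\theta(r)\leq r\leq t_1-1$ are already conditioned on.
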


\begin{proof}
To simplify the notation, we drop the superscript $t$ or $t_1$ in the following proof. It should be understood that $Q_p$ represents $Q_p^{t_1}$, $Y_p$ represents $Y_p^{t_1}$, $M_p$ represents $M_p^t$, and $X_p$ represents $X_p^t$.

First let us consider projections of a deterministic matrix.
Let $\hat{A}_p$ be a deterministic matrix that satisfies the linear constraints $Y_p = \hat{A}_p Q_p$ and $X_p = \hat{A}_p^* M_p$, then we have
\begin{align*}
\hat{A}_p &= \hat{A}_pQ_p(Q_p^*Q_p)^{-1}Q_p^* + \hat{A}_p(\mathsf{I} - Q_p(Q_p^*Q_p)^{-1}Q_p^*),\\
\hat{A}_p &= M_p(M_p^*M_p)^{-1}M_p^*\hat{A}_p + (\mathsf{I} - M_p(M_p^*M_p)^{-1}M_p^*)\hat{A}_p.
\end{align*}
Combining the two equations above, as well as the two linear constraints, we can write
\begin{equation}
\hat{A}_p = Y_p(Q_p^*Q_p)^{-1}Q_p^* + M_p(M_p^*M_p)^{-1}X_p - M_p(M_p^*M_p)^{-1}M_p^*Y_p(Q_p^*Q_p)^{-1}Q_p^* + \mathsf{P}_{M_p}^{\perp} \hat{A}_p \mathsf{P}_{Q_p}^{\perp}.
\label{eq:projA}
\end{equation}

We now demonstrate the conditional distribution of $A_1,...,A_P$. Let $S_1,...,S_P$ be arbitrary Borel sets on $\mathbb{R}^{n\times N_1}$,...,$\mathbb{R}^{n\times N_P}$, respectively.
\begin{align}
&P\left( A_1\in S_1,..., A_P\in S_P \left\vert A_p Q_p = Y_p, A_p^* M_p = X_p, \forall p \right.\right)\nonumber\\
&\overset{(a)}{=} P\left( \mathsf{E}_1^{t_1,t} + \mathsf{P}_{M_1}^{\perp} A_1 \mathsf{P}_{Q_1}^{\perp}\in S_1,...,\mathsf{E}_P^{t_1,t} + \mathsf{P}_{M_P}^{\perp} A_P \mathsf{P}_{Q_P}^{\perp}\in S_P  \left\vert A_p Q_p = Y_p, A_p^* M_p = X_p, \forall p \right.\right)\nonumber\\
&\overset{(b)}{=} P\left( \mathsf{E}_1^{t_1,t} + \mathsf{P}_{M_1}^{\perp} A_1 \mathsf{P}_{Q_1}^{\perp}\in S_1,...,\mathsf{E}_P^{t_1,t} + \mathsf{P}_{M_P}^{\perp} A_P \mathsf{P}_{Q_P}^{\perp}\in S_P \right)\nonumber\\
& = P\left( \mathsf{E}_1^{t_1,t} + \mathsf{P}_{M_1}^{\perp} A_1 \mathsf{P}_{Q_1}^{\perp}\in S_1\right)...P\left(\mathsf{E}_P^{t_1,t} + \mathsf{P}_{M_P}^{\perp} A_P \mathsf{P}_{Q_P}^{\perp}\in S_P \right),\label{eq:cond_A_p}
\end{align}
which implies the desired result.
In step (a), 
\begin{equation*}
\mathsf{E}_p^{t_1,t} = Y_p(Q_p^*Q_p)^{-1}Q_p^* +  M_p(M_p^*M_p)^{-1}X_p^* - M_p(M_p^*M_p)^{-1}M_p^*Y_p (Q_p^*Q_p)^{-1}Q_p^*,\quad p=1,...,P,
\end{equation*}
which follows from \eqref{eq:projA}.
Step (b) holds since $\mathsf{P}_{M_p}^{\perp} A_p \mathsf{P}_{Q_p}^{\perp}$ is independent of the conditioning. The independence is demonstrated as follows.
Notice that $A_p Q_p = A_p \mathsf{P}_{Q_p}^{||} Q_p$. In what follows, we will show that
$A_p^{||}:=A_p \mathsf{P}_{Q_p}^{||}$ is independent of $A_r^{\perp}:=A_r \mathsf{P}_{Q_r}^{\perp}$, for $p,r = 1,...,P$. Then similar approach can be used to demonstrate that $\mathsf{P}_{M_p}^{\perp}A_p$ is independent of $\mathsf{P}_{M_r}^{||}A_r$. Together they provide the justification for step (b). Note that $A_p^{||}$ and $A_r^{\perp}$ are jointly normal, hence it is enough to show they are uncorrelated.
\begin{align*}
\mathbb{E}\left\{ [A_p^{||}]_{i,j} [A_r^{\perp}]_{m,l} \right\} &= \mathbb{E}\left\{ \left(\sum_{k=1}^N [A_p]_{i,k} [\mathsf{P}_{Q_p}^{\parallel}]_{k,j} \right) \left( \sum_{k=1}^N [A_r]_{m,k} \left( \mathsf{I}_{k,l} - [\mathsf{P}_{Q_r}^{\parallel}]_{k,l} \right) \right) \right\}\\
&\overset{(a)}{=} \frac{1}{n}\delta_0(i,m)\delta_0(p,r) \left(\sum_{k=1}^N  [\mathsf{P}_{Q_p}^{\parallel}]_{k,j}\mathsf{I}_{k,l} -  \sum_{k=1}^N  [\mathsf{P}_{Q_p}^{\parallel}]_{k,j}[\mathsf{P}_{Q_r^{t_1}}^{\parallel}]_{k,l} \right)\\
&\overset{(b)}{=} \frac{1}{n}\delta_0(i,m)\delta_0(p,r)  \left( [\mathsf{P}_{Q_p}^{\parallel}]_{l,j} -  \sum_{k=1}^N  [\mathsf{P}_{Q_p}^{\parallel}]_{k,j}[\mathsf{P}_{Q_r}^{\parallel}]_{l,k} \right)\overset{(c)}{=} 0,
\end{align*}
where $\delta_0(i,j)$ is the Kronecker delta function.
In the above, step (a) holds since the original matrix $A$ has $\mathcal{N}(0,1/n)$ i.i.d. entries, step (b) holds since projectors are symmetric matrices, and step (c) follows
the property of projectors $\mathsf{P}^2 = \mathsf{P}$.
\end{proof}

Combining the results in Lemma \ref{lem:Ap_cond} and \cite[Lemma 4]{RushV16}, we have the following conditional distribution lemma.
\begin{lem}
\label{lem:cond_htbt}
For the vectors $h_p^{t+1}$ and $b_p^t$ defined in \eqref{eq:def_htbt}, the following holds for $t\geq 1$, $p=1,...,P$:
\begin{align}
b_p^0|\mathscr{S}^{0,0}& \overset{d}{=} (\sigma_p^0)_{\perp}Z_p^{'0}+\Delta_p^{0,0},\qquad\qquad
h_p^1|\mathscr{S}^{1,0}\overset{d}{=}  (\tau_p^0)_{\perp}Z_p^{0}+\Delta_p^{1,0},\label{eq:def_cond_b0}\\
b_p^{t}|\mathscr{S}^{t,t}&\overset{d}{=}  \sum_{i=0}^{t-1}\hat{\gamma}_{p,i}^tb_p^i + (\sigma_p^t)_{\perp}Z_p^{'t}+\Delta_p^{t,t},\qquad
h_p^{t}|\mathscr{S}^{t+1,t}\overset{d}{=}  \sum_{i=0}^{t-1}\hat{\alpha}_{p,i}^th_p^{i+1} + (\tau_p^t)_{\perp}Z_p^{t}+\Delta_p^{t+1,t},\label{eq:def_cond_bt}
\end{align}
where
\begin{align}
\Delta_p^{0,0}&=\left(\frac{\|(q_p^0)_{\perp}\|}{\sqrt{n}}-(\sigma_p^0)_{\perp}\right)Z_p^{'0} \label{eq:def_del00}\\
\Delta_p^{1,0}&=\left[\left(\frac{\|(m_p^0)_{\perp}\|}{\sqrt{n}}-(\tau_p^0)_{\perp}\right)\mathsf{I}-\frac{\|(m_p^0)_{\perp}\|}{\sqrt{n}}\mathsf{P}_{q_p^0}^\parallel\right]Z_p^{0}
+q_p^0\left(\frac{\|q_p^0\|^2}{n}\right)^{-1}\left(\frac{(b_p^0)^*(m_p^0)_{\perp}}{n}-\frac{\|q_p^0\|^2}{n}\right)\\
\Delta_p^{t,t}&=\sum_{i=0}^{t-1}\left(\gamma_{p,i}^t-\hat{\gamma}_{p,i}^t\right)b_p^i
+\left[\left(\frac{\|(q_p^t)_{\perp}\|}{\sqrt{n}}-(\sigma_p^t)_{\perp}\right)\mathsf{I}-\frac{\|(q_p^t)_\perp\|}{\sqrt{n}}\mathsf{P}_{M_p^t}^\parallel\right]Z_p^{'t}\nonumber\\
&+M_p^t\left(\frac{(M_p^t)^*M_p^t}{n}\right)^{-1}
\left(\frac{({H}_p^t)^*(q_p^t)_{\perp}}{n}-\frac{(M_p^t)^*}{n}\left[\lambda_p^tm_p^{t-1}-\sum_{i=1}^{t-1}\lambda_{p,i}^t\gamma_{p,i}^tm_p^{i-1} \right]\right)\label{eq:def_deltt}\\
\Delta_p^{t+1,t}&=\sum_{i=0}^{t-1}\left(\alpha_{p,i}^t-\hat{\alpha}_{p,i}^t\right)h_p^{i+1}
+\left[\left(\frac{\|(m_p^t)_{\perp}\|}{\sqrt{n}}-(\tau_p^t)_{\perp}\right)\mathsf{I}-\frac{\|(m_p^t)_{\perp}\|}{\sqrt{n}}\mathsf{P}_{Q_p^{t+1}}^\parallel\right]Z_p^{t}\nonumber\\
&+Q_p^{t+1}\left(\frac{(Q_p^{t+1})^*Q_p^{t+1}}{n}\right)^{-1}
\left(\frac{({ B}_p^{t+1})^*(m_p^t)_{\perp}}{n}-\frac{(Q_p^{t+1})^*}{n}\left[q_p^{t}-\sum_{i=0}^{t-1}\alpha_{p,i}^tq_p^{i} \right]\right),
\end{align}
where $Z_p^{'t}\in\mathbb{R}^n$ and $Z_p^t\in\mathbb{R}^{N_p}$ are random vectors with independent standard normal elements, and are independent of the corresponding sigma algebras.
Moreover, $Z_p^{'t}$ is independent of $Z_q^{'t}$ and $Z_p^{t}$ is independent of $Z_q^{t}$ when $p\neq q$.
\end{lem}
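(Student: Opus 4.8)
The plan is to obtain each of the four identities by substituting the conditional law of $A_p$ from Lemma~\ref{lem:Ap_cond} into the defining relations $b_p^t = A_p q_p^t - \lambda_p^t m_p^{t-1}$ and $h_p^{t+1} = A_p^* m_p^t - q_p^t$ of \eqref{eq:def_htbt}, and then repackaging the result into a leading Gaussian term plus a remainder $\Delta$. Take $b_p^t$ conditioned on $\mathscr{S}^{t,t}$ (so $t_1=t$). Since $\lambda_p^t m_p^{t-1}$ is $\mathscr{S}^{t,t}$-measurable, only $A_p q_p^t$ carries randomness; by Lemma~\ref{lem:Ap_cond} we replace $A_p$ with $\mathsf{E}_p^{t,t} + \mathsf{P}_{M_p^t}^{\perp}\tilde{A}_p\mathsf{P}_{Q_p^t}^{\perp}$, where $\tilde{A}_p$ is an independent copy of $A_p$. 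Because $\mathsf{P}_{Q_p^t}^{\perp}q_p^t=(q_p^t)_{\perp}$ and $\tilde{A}_p$ has i.i.d.\ $\mathcal{N}(0,1/n)$ entries, the random part $\mathsf{P}_{M_p^t}^{\perp}\tilde{A}_p(q_p^t)_{\perp}$ is distributed as $\frac{\|(q_p^t)_{\perp}\|}{\sqrt{n}}\mathsf{P}_{M_p^t}^{\perp}Z_p^{'t}$ for a standard Gaussian $Z_p^{'t}\in\mathbb{R}^n$. Writing this as $(\sigma_p^t)_{\perp}Z_p^{'t}$ plus the discrepancy $(\frac{\|(q_p^t)_{\perp}\|}{\sqrt{n}}-(\sigma_p^t)_{\perp})Z_p^{'t}-\frac{\|(q_p^t)_{\perp}\|}{\sqrt{n}}\mathsf{P}_{M_p^t}^{\parallel}Z_p^{'t}$ reproduces the second line of $\Delta_p^{t,t}$.

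The deterministic piece $\mathsf{E}_p^{t,t}q_p^t$ supplies the rest. I would first expand $Y_p^t((Q_p^t)^*Q_p^t)^{-1}(Q_p^t)^*q_p^t = Y_p^t\gamma_p^t$, and, using $Y_p^t=[b_p^0 | b_p^1+\lambda_p^1 m_p^0 | \cdots]$, split off $\sum_{i=0}^{t-1}\gamma_{p,i}^t b_p^i$; subtracting the concentrating part $\sum_i\hat{\gamma}_{p,i}^t b_p^i$ leaves $\sum_i(\gamma_{p,i}^t-\hat{\gamma}_{p,i}^t)b_p^i$, the first summand of $\Delta_p^{t,t}$, while the $\lambda$-weighted columns of $Y_p^t$ together with the second and third terms of $\mathsf{E}_p^{t,t}$ (using $X_p^t=H_p^t+Q_p^t$) collapse into the final $M_p^t(\cdots)^{-1}(\cdots)$ correction. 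The identity for $h_p^{t+1}$ conditioned on $\mathscr{S}^{t+1,t}$ is entirely parallel: now $A_p^*$ acts on $m_p^t$, which is $\mathscr{S}^{t+1,t}$-measurable---crucially this includes the cross-processor sum $\sum_{u\neq p}b_u^{\theta(t)}$, so $m_p^t$ is a fixed vector under the conditioning---and $(m_p^t)_{\perp}$ plays the role of $(q_p^t)_{\perp}$, producing the leading term $(\tau_p^t)_{\perp}Z_p^{t}$ with $Z_p^t\in\mathbb{R}^{N_p}$ and the remainder $\Delta_p^{t+1,t}$. The base cases $b_p^0\mid\mathscr{S}^{0,0}$ and $h_p^1\mid\mathscr{S}^{1,0}$ are the same computation with empty projection spaces.

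The reason this reduces cleanly to the single-processor argument of \cite[Lemma~4]{RushV16} is that the only $A_p$-dependence in $b_p^t$ and $h_p^{t+1}$ enters through $A_p q_p^t$ and $A_p^* m_p^t$, while every other ingredient---including the inter-processor coupling carried by $m_p^t$---is measurable with respect to the conditioning sigma algebra and hence acts as a constant. The genuinely new content is the cross-processor independence: since Lemma~\ref{lem:Ap_cond} guarantees that $\tilde{A}_1,\dots,\tilde{A}_P$ are mutually independent, the Gaussians $Z_p^{'t}$ (resp.\ $Z_p^{t}$), being measurable functions of $\tilde{A}_p(q_p^t)_{\perp}$ (resp.\ $\tilde{A}_p^*(m_p^t)_{\perp}$), inherit independence across $p$, which yields the final independence assertion. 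I expect the main obstacle to be organizational rather than conceptual: confirming that the leading Gaussian coefficient is exactly $(\sigma_p^t)_{\perp}$ (resp.\ $(\tau_p^t)_{\perp}$) as defined in \eqref{eq:def_sigma_tau_perp}, and that the regrouping of the $\mathsf{E}_p^{t,t}$-terms matches each displayed $\Delta$ expression term for term, demands the same careful index-and-projector bookkeeping as in \cite{RushV16}, now applied separately within each processor.
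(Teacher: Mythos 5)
Your proposal is correct and follows essentially the same route as the paper: the paper's proof simply invokes the argument of \cite[Lemma 4]{RushV16} processor-by-processor (which you reconstruct in detail — conditioning reduces $b_p^t$ and $h_p^{t+1}$ to $A_p q_p^t$ and $A_p^* m_p^t$ with everything else, including the cross-processor sum inside $m_p^t$, being $\mathscr{S}$-measurable) and derives the cross-processor independence of $Z_p^{'t}$ and $Z_p^t$ from the mutual independence of $\tilde{A}_1,\dots,\tilde{A}_P$ in Lemma~\ref{lem:Ap_cond}, exactly as you do.
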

\begin{proof}
The proof for each individual $p\in [P]$ is similar to the proof for \cite[Lemma 4]{RushV16}. The claim that $Z_p^{'t}$ is independent of $Z_q^{'t}$ and $Z_p^{t}$ is independent of $Z_q^{t}$ when $p\neq q$ follows from Lemma \ref{lem:Ap_cond}, where we have that $\tilde{A}_p$ is independent of $\tilde{A}_q$ for $p\neq q$.
\end{proof}

\subsection{Main Concentration Lemma}
\label{subsec:main_lemma}

We use the shorthand $X_n \doteq c$ to denote the concentration inequality $P\left(\abs{X_n-c}\geq \e\right)\leq K_t e^{-\kappa_t n \e}$. As specified in the theorem statement, the lemma holds for all $\e \in (0,1)$, with $K_t,\kappa_t$ denoting the generic constants dependent on $t$, but not on $n,\e$.

\begin{lem}
With the $\doteq$ notation defined above, the following holds for all $t\geq 0$, $p=1,...,P$.

\begin{enumerate}
\item[(a)] 
\begin{align}
P\left(\frac{\|\Delta_p^{t,t}\|^2}{n}\geq\epsilon\right)&\leq K_t e^{-\kappa_t n\epsilon}.\\
P\left(\frac{\|\Delta_p^{t+1,t}\|^2}{n}\geq\epsilon\right)&\leq K_t e^{-\kappa_t n\epsilon}.\label{eq:main_lem_Ba}
\end{align}
\item[(b)]
({\em i}) For pseudo Lipschitz functions $\phi_h:\mathbb{R}^{t+2}\rightarrow\mathbb{R}$,
\begin{equation}
\frac{1}{N_p}\sum_{i=1}^{N_p}\phi_h(h_{p,i}^1,...,h_{p,i}^{t+1},x_{p,i}) \doteq \mathbb{E}[\phi_h(\tau_p^0\tilde{Z}_p^0,...,\tau_p^t\tilde{Z}_p^t,X)].
\end{equation}
({\em ii}) Let $\psi_h:\mathbb{R}^2\rightarrow\mathbb{R}$ be a bounded function that is differentiable in the first argument except possibly at a finite number of points, with bounded derivative when it exists. Then,
\begin{equation}
\frac{1}{N_p}\sum_{i=1}^{N_p}\psi_h(h_{p,i}^{t+1},x_{p,i}) \doteq \mathbb{E}[\psi_h(\tau_{p}^t\tilde{Z}_p^t,X)],
\end{equation}
where $\{\tilde{Z}_p^t\}$ is defined in \eqref{eq:def_Z_tilde}, and $X\sim p_X$ is independent of $\{\tilde{Z}\}_p^t$.\\
({\em iii}) For pseudo-Lipschitz function $phi_b:\mathbb{R}^{P(t+1)+1}\rightarrow\mathbb{R}$,
\begin{equation}
\frac{1}{n}\sum_{i=1}^n \phi_b(b_{1,i}^0,...,b_{P,i}^0,...,b_{1,i}^t,...,b_{P,i}^t,w_i) \doteq \mathbb{E}\left[\phi_b(\sigma_1^0\breve{Z}_1^0,...,\sigma_P^0\breve{Z}_P^0,...,\sigma_1^t\breve{Z}_1^t,...,\sigma_P^t\breve{Z}_P^t,W)\right],\label{eq:main_lem_Bb3}
\end{equation}
where $\{\breve{Z}_p^t\}$ is defined in \eqref{eq:def_Z_tilde}, and $W\sim p_W$ is independent of $\{\breve{Z}_p^t\}$.
\item[(c)]
\begin{align}
\frac{(h_p^{t+1})^*q_p^0}{n} &\doteq 0,\qquad \frac{(h_p^{t+1})^*x_p}{n}\doteq 0.\\
\frac{(b_p^t)^* w}{n} &\doteq 0. \label{eq:main_lem_Bc}
\end{align}
\item[(d)] For all $0\leq r\leq t$, $q\neq p$,
\begin{align}
\frac{(h_p^{r+1})^*h_p^{t+1}}{n} &\doteq \breve{E}_p^{r,t}.\\
\frac{(b_p^r)^*b_p^t}{n} &\doteq \tilde{E}_p^{r,t}.\label{eq:main_lem_Bd}
\end{align}
\item[(e)] For all $0\leq r\leq t$, 
\begin{align}
\frac{(q_p^{0})^*q_p^{t+1}}{n} &\doteq \tilde{E}_p^{r+1,t+1},\qquad \frac{(q_p^{r+1})^*q_p^{t+1}}{n} \doteq \tilde{E}_p^{r+1,t+1}.\\
\frac{(m_p^r)^*m_p^t}{n} &\doteq \breve{E}_p^{r,t}. \label{eq:main_lem_Be}
\end{align}
\item[(f)] Define $\hat{\lambda}_p^{t+1}=\delta_p^{-1}\mathbb{E}[f_t'(\tau_p^t\tilde{Z}_p^t,X)]$.  For all $ 0\leq r\leq t$, 
\begin{align}
\lambda_p^t \doteq \hat{\lambda}_p^t, \qquad \frac{(h_p^{r+1})^*q_p^{t+1}}{n} &\doteq \hat{\lambda}_p^{t+1}\breve{E}_p^{r,t},\qquad \frac{(h_p^{t+1})^*q_p^{r+1}}{n} \doteq \hat{\lambda}_p^{r+1}\breve{E}_p^{r,t}.\\
\frac{(b_p^{r})^*m_p^{t}}{n} &\doteq \tilde{E}_p^{r,t},\qquad \frac{(b_p^{t})^*m_p^{r}}{n} \doteq \tilde{E}_p^{r,t}.\label{eq:main_lem_Bf}\\
\end{align}
\item[(g)] For $\mathbf{Q}_p^{t+1} = \frac{1}{n}(Q_p^{t+1})^*Q_p^{t+1}$ and $\mathbf{M}_p^{t} = \frac{1}{n}(M_p^{t})^*M_p^{t}$, when the inverses exist, for all $0\leq i,j\leq t$, $0\leq i',j'\leq t-1$, 
\begin{align}
[(\mathbf{Q}_p^{t+1})^{-1}]_{i+1,j+1} &\doteq [(\tilde{C}_p^{t+1})^{-1}]_{i+1,j+1}, \qquad  \gamma_{p,i}^{t+1} \doteq \hat{\gamma}_{p,i}^{t+1}.\\
[(\mathbf{M}_p^{t})^{-1}]_{i'+1,j'+1} &\doteq [(\breve{C}_p^{t})^{-1}]_{i'+1,j'+1}, \qquad   \alpha_{p,i'}^t \doteq \hat{\alpha}_{p,i'}^t, \label{eq:main_lem_Bg}
\end{align}
where $\hat{\gamma}_{p,i}^{t+1}$ and $\hat{\alpha}_{p,i}^t$ are defined in \eqref{eq:def_hat_gamma_alpha}.
\item[(h)]
\begin{align}
\frac{\|(q_p^{t+1})_{\perp}\|^2}{n} &\doteq (\sigma_p^{t+1})_\perp^2.\\
\frac{\|(m_p^t)_{\perp}\|^2}{n} &\doteq (\tau_p^t)_\perp^2, \label{eq:main_lem_Bh}
\end{align}
where $(\sigma_p^{t+1})_\perp^2$ and $(\tau_p^t)_\perp^2$ are defined in \eqref{eq:def_sigma_tau_perp}.
\end{enumerate}
\label{lem:main_lemma}
\end{lem}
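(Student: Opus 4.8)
The plan is to prove Lemma \ref{lem:main_lemma} by induction on the iteration index $t$, importing wholesale the two-phase inductive scheme of \cite{RushV16} (itself descended from \cite{Bayati2011}) and adding the bookkeeping needed to track the vectors exchanged between processors. Write $\mathcal{H}_{t+1}$ for the collection of statements in parts (a)--(h) indexed by $h_p^{t+1}$, $q_p^{t+1}$ (for all $p$ simultaneously), and $\mathcal{B}_t$ for the corresponding statements indexed by $b_p^t$, $m_p^t$. I would establish the chain
\[
\mathcal{B}_0 \Rightarrow \mathcal{H}_1 \Rightarrow \mathcal{B}_1 \Rightarrow \mathcal{H}_2 \Rightarrow \cdots,
\]
proving each node using all previously established nodes. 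The base cases $\mathcal{B}_0$ and $\mathcal{H}_1$ rest on assumption \eqref{eq:q0_assume} for $\|q_p^0\|^2/n$ and on the first line \eqref{eq:def_cond_b0} of the conditional distribution lemma. Because the cross-processor couplings reference the synchronized index $\theta(t)\leq t$, the induction must be carried out jointly over all $P$ processors at each fixed $t$, so that a cross-processor statement at step $t$ may legitimately invoke a within-processor statement at the same or an earlier index.

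The workhorse of every inductive step is Lemma \ref{lem:cond_htbt}: I substitute the conditional representations \eqref{eq:def_cond_bt} of $b_p^t$ and $h_p^{t+1}$, each of which is a deterministic linear combination of previously generated vectors, plus a fresh Gaussian term scaled by $(\sigma_p^t)_\perp$ or $(\tau_p^t)_\perp$, plus a deviation term $\Delta$. At each step I first dispatch part (a), namely $\|\Delta_p^{t,t}\|^2/n \doteq 0$ and $\|\Delta_p^{t+1,t}\|^2/n \doteq 0$. Each summand of $\Delta$ is a product of a factor that concentrates to zero --- for instance $\|(q_p^t)_\perp\|/\sqrt{n}-(\sigma_p^t)_\perp$, controlled by part (h), or $\gamma_{p,i}^t-\hat{\gamma}_{p,i}^t$, controlled by part (g) --- with a factor whose norm concentrates to a finite constant, so a concentration bound for products together with the inductive hypotheses closes this part. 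Once (a) holds, every empirical inner product and pseudo-Lipschitz average in (b)--(h) is reduced to the same quantity evaluated on the clean Gaussian surrogate, where the concentration toolbox of \cite{RushV16} --- concentration of (pseudo-)Lipschitz functions of Gaussian vectors and of $\chi^2$-type norms --- applies directly, and Gaussian integration by parts supplies the limiting values $\hat{\lambda}_p^t$, $\hat{\gamma}_p^t$, $\hat{\alpha}_p^t$ appearing in parts (f) and (g).

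The genuinely new ingredient, and the step I expect to be the main obstacle, is the cross-processor coupling carried by $m_p^t = b_p^t + \sum_{u\neq p} b_u^{\theta(t)} - w$. Every inner product touching an $m$-vector --- parts (e) and (f), and several $\Delta$ terms --- expands into the within-processor piece $(b_p^r)^* b_p^t/n$, the matched other-processor pieces $(b_u^{\theta(r)})^* b_u^{\theta(t)}/n$ for $u\neq p$, a noise piece, and genuinely cross-processor pieces $(b_p^r)^* b_u^{\theta(t)}/n$ and $(b_u^{\theta(r)})^* b_v^{\theta(t)}/n$ carrying distinct processor indices. By the inductive hypotheses the first three concentrate to $\tilde{E}_p^{r,t}$, $\tilde{E}_u^{\theta(r),\theta(t)}$, and $\sigma_W^2$, which is precisely the decomposition defining $\breve{E}_p^{r,t}$ in \eqref{eq:def_tilde_E}; this is how the multiprocessor covariances emerge. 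The cross-processor pieces must instead concentrate to $0$, and here Lemma \ref{lem:Ap_cond} is indispensable: since $\tilde{A}_p$ is independent of $\tilde{A}_q$ for $p\neq q$, the fresh conditional Gaussians $Z_p^{'t}$ and $Z_q^{'t}$ (likewise $Z_p^t$ and $Z_q^t$) are independent, so each cross-processor inner product is conditionally an average of products of independent mean-zero quantities and vanishes in the limit. The delicate part is to organize the induction so that whenever a term $(b_p^r)^* b_u^{\theta(t)}/n$ with $p\neq u$ is encountered, one factor is already $\mathscr{S}$-measurable while the other carries the fresh independent Gaussian, allowing the across-processor independence to be exploited cleanly.
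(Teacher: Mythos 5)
Your induction architecture is the same as the paper's (alternating $\mc{B}_0\Rightarrow\mc{H}_1\Rightarrow\mc{B}_1\Rightarrow\cdots$, conditional representations from Lemma \ref{lem:cond_htbt}, part (a) dispatched first so that everything else reduces to the Gaussian surrogate), but you handle the one genuinely new ingredient --- the cross-processor coupling in $m_p^t=b_p^t+\sum_{u\neq p}b_u^{\theta(t)}-w$ --- by a different device. You propose to expand each $m$-inner-product into within-processor, matched, noise, and cross-processor pieces, and to kill the cross pieces $(b_p^r)^*b_u^{\theta(t)}/n$ one at a time via the conditional independence of the fresh Gaussians guaranteed by Lemma \ref{lem:Ap_cond}; as you note, this forces you to arrange matters so that one factor is $\mathscr{S}$-measurable while the other carries the fresh Gaussian, and --- a point you flag but do not resolve --- the measurable part of \eqref{eq:def_cond_bt} is itself $\sum_i\hat{\gamma}^t_{p,i}b_p^i$, so the argument recurses and you would have to add the statements $(b_p^r)^*b_q^l/n\doteq 0$ for $p\neq q$ as explicit induction hypotheses and propagate them. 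The paper sidesteps exactly this bookkeeping by stating the $b$-concentration \eqref{eq:main_lem_Bb3} \emph{jointly} over all $P$ processors: $\phi_b$ acts on the full tuple $(b_{1,i}^0,\dots,b_{P,i}^t,w_i)$, the limiting Gaussians $\breve{Z}_p^r$ and $\breve{Z}_q^t$ are declared independent for $p\neq q$ in \eqref{eq:def_Z_tilde}, and then every quantity in parts (c)--(f) involving $m$-vectors is a single PL(2) function of the joint tuple (via a Lipschitz map $g$ and Lemma \ref{lem:Lprods}), with the vanishing of cross terms inherited for free from the independence of the limiting Gaussians. Both routes lead to the same place; the paper's joint formulation is what makes parts (e) and (f) one-line corollaries of (b)(iii), whereas your version is workable but requires enlarging the induction hypothesis in precisely the way you identify as delicate.
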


\subsection{Proof of Theorem \ref{thm:mainresult}}
\label{subsec:proof_thm1}
Applying Part (b)(i) of Lemma \ref{lem:main_lemma}, the proof the Theorem \ref{thm:mainresult} is the same as the proof in \cite{RushV16} and therefore is not repeated here.

\section{Proof of Lemma \ref{lem:main_lemma}}
\label{sec:proof_mainlem}

The proof of Lemma \ref{lem:main_lemma} uses induction in the iteration index $t$. We label as $\mc{B}_t$ the results \eqref{eq:main_lem_Ba}, \eqref{eq:main_lem_Bb3}, \eqref{eq:main_lem_Bc}, \eqref{eq:main_lem_Bd}, \eqref{eq:main_lem_Be}, \eqref{eq:main_lem_Bf}, \eqref{eq:main_lem_Bg}, \eqref{eq:main_lem_Bh}. The rest of the results in Lemma \ref{lem:main_lemma} are labels as $\mc{H}_{t+1}$. In the proof, $K,\kappa$ are used as universal constant in the upper bounds.

\subsection{Step 1: Showing $\mc{B}_0$ holds.} In the following, when a statement involves $p=1,...,P$, we demonstrate the proof for $p=1$, and the cases when $p=2,...,P$ can be obtained similarly; unless otherwise specified.
 
\textbf{(a)} Recall the definition of $\Delta_1^{0,0}$ in \eqref{eq:def_del00}, we have
\begin{align}
P\left( \frac{\|\Delta_1^{0,0}\|}{n} \geq \e \right) &\leq P\left( \abs{\frac{\|q_1^0\|}{\sqrt{n}} - (\sigma_1^0)_{\perp}} \geq \sqrt{\frac{\e}{2}} \right) + P\left( \abs{\frac{\|Z_1^{'0}\|}{\sqrt{n}} - 1} \geq \sqrt{\frac{\e}{2}} \right)\nonumber\\ 
&\overset{(a)}{\leq} Ke^{-\kappa n \e} + Ke^{-\kappa n \e},
\label{eq:B0_a1}
\end{align}
where step $(a)$ uses our assumption on $q_1^0$ in \eqref{eq:q0_assume}, Lemma \ref{sqroots}, and Lemma \ref{subexp}.

\textbf{(b) (iii)} Recall the conditional distribution of $b_p^0$ in \eqref{eq:def_cond_b0}, by Lemma \ref{sums},
\begin{align*}
&P\left(\abs{\frac{1}{n}\sum_{i=1}^n \phi_b(b_{1,i}^0,...,b_{P,i}^0,w_i) - \mathbb{E}\left[\phi_b(\sigma_1^0\breve{Z}_1^0,...,\sigma_P^0\breve{Z}_P^0,W)\right]} \geq \e \right)\\
&\leq P\left(\frac{1}{n}\sum_{i=1}^n \abs{\phi_b(\sigma_1^0 Z_{1,i}^{'0}+\Delta_{1,i}^{0,0},...,\sigma_P^0 Z_{P,i}^{'0}+\Delta_{P,i}^{0,0},w_i) - \phi_b(\sigma_1^0 Z_{1,i}^{'0},...,\sigma_P^0 Z_{P,i}^{'0},w_i)} \geq \frac{\e}{2} \right)\\
&+ P\left(\abs{\frac{1}{n}\sum_{i=1}^n \phi_b(\sigma_1^0 Z_{1,i}^{'0},...,\sigma_P^0 Z_{P,i}^{'0},w_i) - \mathbb{E}\left[\phi_b(\sigma_1^0\breve{Z}_1^0,...,\sigma_P^0\breve{Z}_P^0,W)\right]} \geq \frac{\e}{2} \right).
\end{align*}
Label the two terms above as $T_1$ and $T_2$, we will show that both are upper bounded by $Ke^{-\kappa n \e^2}$.
First consider $T_1$.
\begin{align}
T_1 &\overset{(a)}{\leq} P\left( \frac{1}{n} \sum_{i=1}^n \left(1+ \sqrt{\sum_{u=1}^P \left(\sigma_u^0 Z_{u,i}^{'0} + \Delta_{u,i}^{0,0}\right)^2} +  \sqrt{\sum_{u=1}^P \left(\sigma_u^0 Z_{u,i}^{'0}\right)^2} \right)\left(\sqrt{\sum_{u=1}^P (\Delta_{u,i}^{0,0})^2}\right) \geq \frac{\e}{2L}\right)\nonumber\\
&\overset{(b)}{\leq} P\left(\sqrt{1+ \sum_{u=1}^P \frac{\|\Delta_u^{0,0}\|^2}{n} + 4 \sum_{u=1}^{P} (\sigma_u^0)^2 \frac{\|Z_u^{'0}\|^2}{n}} \sqrt{\sum_{u=1}^P \frac{\|\Delta_u^{0,0}\|^2}{n}}  \geq \frac{\e}{2\sqrt{3}L}\right)\nonumber\\
&\leq P\left( \left( 1 + \sum_{u=1}^P \frac{\|\Delta_u^{0,0}\|}{\sqrt{n}} + 2\sum_{u=1}^P \sigma_u^0\frac{\|Z_u^{'0}\|}{\sqrt{n}}\right) \left(\sum_{u=1}^P \frac{\|\Delta_u^{0,0}\|}{\sqrt{n}}\right) \geq \frac{\e}{2\sqrt{3}L} \right).\label{eq:B0_b1}
\end{align}
In the above, step $(a)$ holds because $\phi_b$ is pseudo-Lipschitz. Step $(b)$ uses triangle inequality to obtain $\sqrt{\sum_{u=1}^P \left(\sigma_u^0 Z_{u,i}^{'0} + \Delta_{u,i}^{0,0}\right)^2}\leq \sqrt{\sum_{u=1}^P \left(\sigma_u^0 Z_{u,i}^{'0}\right)^2} + \sqrt{\sum_{u=1}^P \left(\Delta_{u,i}^{0,0}\right)^2}$ (think of it as $\ell_2$-norm of length-$P$ vectors), followed by Cauchy-Schwarz and Lemma \ref{lem:squaredsums}.

From \eqref{eq:B0_b1}, we have
\begin{align*}
T_1 &\leq \sum_{u=1}^P P\left( \frac{\|Z_u^{'0}\|}{\sqrt{n}} - 1  \geq 1 \right) + P\left( \sum_{u=1}^P \frac{\|\Delta_u^{0,0}\|}{\sqrt{n}} \geq \frac{\e\min\{1,\frac{1}{2\sqrt{3}L}\}}{2+8\sum_{u=1}^P \sigma_u^0}\right)\\
&\overset{(a)}{\leq} \sum_{u=1}^P P\left( \frac{\|Z_u^{'0}\|}{\sqrt{n}} - 1  \geq 1 \right) + \sum_{u=1}^P P\left(\frac{\|\Delta_u^{0,0}\|}{\sqrt{n}} \geq \frac{\e\min\{1,\frac{1}{2\sqrt{3}L}\}}{P(2+8\sum_{u=1}^P \sigma_u^0)}\right)\\
&\overset{(b)}{\leq} K e^{-n} + K e^{-\kappa n\e},
\end{align*}
where step $(a)$ follows from Lemma \ref{sums} and step $(b)$ uses Lemma \ref{subexp} and $\mc{B}_0 (a)$.

Next consider $T_2$. Using Lemma \ref{sums}, we have
\begin{align*}
T_2  &\leq P\left( \abs{\frac{1}{n}\sum_{i=1}^n \left(\phi_b(\sigma_1^0 Z_{1,i}^{'0},...,\sigma_P^0 Z_{P,i}^{'0},w_i) - \mathbb{E}_{(Z_{1,i},...,Z_{P,i})}\left[\phi_b(\sigma_1^0 Z_{1,i}^{'0},...,\sigma_P^0 Z_{P,i}^{'0},w_i)\right]\right)} \geq \frac{\e}{4}  \right)\\
& + P\left( \abs{\frac{1}{n}\sum_{i=1}^n \mathbb{E}_{(Z_{1,i},...,Z_{P,i})}\left[\phi_b(\sigma_1^0 Z_{1,i}^{'0},...,\sigma_P^0 Z_{P,i}^{'0},w_i)\right] - \mathbb{E}\left[\phi_b(\sigma_1^0\breve{Z}_1^0,...,\sigma_P^0\breve{Z}_P^0,W)\right]} \geq \frac{\e}{4}  \right).
\end{align*}
Label the two terms above as $T_{2,a}$ and $T_{2,b}$.  $T_{2,a}$ has the desire bound by first noticing that the function $\tilde{\phi}_{b,i}:\mathbb{R}^P\rightarrow\mathbb{R}$ defined as $\tilde{\phi}_{b,i}(x_1,...,x_P):=\phi_b(x_1,...,x_P,w_i)$ is PL(2) since $\phi_{b}$ is, and then applying Lemma \ref{lem:subgauss_vec} to sequence of i.i.d. length-$P$ random vectors consisting of i.i.d. $\mc{N}(0,1)$ Gaussian random variables (variance factor $\nu=1$ in Lemma \ref{lem:subgauss_vec}). $T_{2,b}$ has the desired bound by first noticing that the function $\hat{\phi}_b:\mathbb{R}\rightarrow\mathbb{R}$ defined as $\hat{\phi}(x):=\mathbb{E}\left[\phi_b(\sigma_1^0\breve{Z}_1^0,...,\sigma_P^0\breve{Z}_P^0,x)\right]$ is PL(2) by Lemma \ref{lem:PLexamples}.

\textbf{(c)} The function $\phi_b(b_{1,i}^0,...,b_{P,i}^0,w_i):=b_{1,i}^0 w_i$ is PL(2) by Lemma \ref{lem:Lprods}. Apply $\mc{B}_0(b)(iii)$, we have
\begin{equation*}
\frac{1}{n}\sum_{i=1}^n b_{1,i}^0 w_i \doteq \mathbb{E}\left[\sigma_1^0\breve{Z}_1^0 W\right]=0.
\end{equation*}

\textbf{(d)} The function $\phi_b(b_{1,i}^0,...,b_{P,i}^0,w_i):=(b_{1,i}^0)^2$ is PL(2) by Lemma \ref{lem:Lprods}. Apply $\mc{B}_0(b)(iii)$, we have
\begin{equation*}
\frac{1}{n}\sum_{i=1}^n (b_{1,i}^0)^2 \doteq \mathbb{E}\left[(\sigma_1^0\breve{Z}_1^0)^2\right] = \tilde{E}_1^{0,0},
\end{equation*}
where the last equality uses \eqref{eq:def_Z_tilde}.

\textbf{(e)} Recall that $m_1^0 = b_1^0 - w$.
The function $\phi_b(b_{1,i}^0,...,b_{P,i}^0,w_i):=b_{1,i}^0 - w_i$ is PL(2). Apply $\mc{B}_0 (b)(iii)$, we have
\begin{equation*}
\frac{(m_1^0)^*m_1^0}{n} = \frac{1}{n}\sum_{i=1}^{n} (b_{1,i}^0 - w_i)^2 \doteq \mathbb{E}\left[(\sigma_1^0 \breve{Z}_1^0 - W)^2\right] = (\sigma_1^0)^2 + \sigma_W^2 = \breve{E}_1^{0,0}.
\end{equation*}
The last equality follows from \eqref{eq:def_tilde_E} by noticing that $\tilde{E}_1^{\theta(0),\theta(0)}=0$ by definition.

\textbf{(f)} The function $\phi_b(b_{1,i}^0,...,b_{P,i}^0,w_i):= b_{1,i}^0(b_{1,i}^0 - w_i)$ is PL(2) by Lemma \ref{lem:Lprods}. Apply $\mc{B}_0(b)(iii)$, we have
\begin{equation*}
\frac{(b_1^0)^*m_1^0}{n} = \frac{1}{n}b_{1,i}^0(b_{1,i}^0 - w_i) \doteq \mathbb{E}\left[ \sigma_1^0\breve{Z}_1^0( \sigma_1^0\breve{Z}_1^0 - W) \right] = \tilde{E}_1^{0,0}.
\end{equation*}

\textbf{(g)} Nothing to prove here.

\textbf{(h)} The result is equivalent to $\mc{B}_0(e)$, since $\|(m_p^0)_{\perp}\|=\|m_1^0\|$ and $(\tau_p^0)_{\perp}^2 = (\tau_p^0)^2$.

\subsection{Step 2: Showing $\mc{H}_{1}$ holds.}
The proof is the same as in \cite{RushV16} and is not repeated hear.

\subsection{Step 3: Showing $\mc{B}_t$ holds.} We prove the statements in $\mc{B}_t$ assuming that $\mc{B}_{t-1}$ and $\mc{H}_{t}$ holds by inductive hypothesis.

\textbf{(a)} The proof is the same as in \cite{RushV16}.

\textbf{(b)(iii)} For brevity, define $\mathbb{E}_{\phi_b}:=\mathbb{E}\left[\phi_b(\sigma_1^0\breve{Z}_1^0,...,\sigma_P^0\breve{Z}_P^0,...,\sigma_1^t\breve{Z}_1^t,...,\sigma_P^t\breve{Z}_P^t,W)\right]$, and
\begin{align*}
a_i &= (b_{1,i}^0,...,b_{P,i}^0,...,b_{1,i}^{t-1},...,b_{P,i}^{t-1},\sum_{r=0}^{t-1}\hat{\gamma}^t_{1,r}b_{1,i}^r+(\sigma_1^t)_{\perp}Z_{1,i}^{'t} + [\Delta_1^{t,t}]_i,...,\sum_{r=0}^{t-1}\hat{\gamma}^t_{P,r}b_{P,i}^r+(\sigma_P^t)_{\perp}Z_{P,i}^{'t} + [\Delta_P^{t,t}]_i,w_i)\\
c_i &= (b_{1,i}^0,...,b_{P,i}^0,...,b_{1,i}^{t-1},...,b_{P,i}^{t-1},\sum_{r=0}^{t-1}\hat{\gamma}^t_{1,r}b_{1,i}^r+(\sigma_1^t)_{\perp}Z_{1,i}^{'t},...,\sum_{r=0}^{t-1}\hat{\gamma}^t_{P,r}b_{P,i}^r+(\sigma_P^t)_{\perp}Z_{P,i}^{'t},w_i),
\end{align*}
for $i=1,...,n$. Hence, $a,c$ are length-$n$ vectors with elements in $\mathbb{R}^{P(t+1)+1}$.

Then, using the conditional distribution of $b_p^t$ defined in Lemma \ref{lem:cond_htbt} \eqref{eq:def_cond_bt}, together with Lemma \ref{sums}, we have
\begin{align*}
&P\left(\abs{\frac{1}{n}\sum_{i=1}^n \phi_b(b_{1,i}^0,...,b_{P,i}^0,...,b_{1,i}^{t},...,b_{P,i}^{t},w_i) - \mathbb{E}_{\phi_b}} \geq \e\right)\\
&\leq P\left(\abs{\frac{1}{n}\sum_{i=1}^n \left(\phi_b(a_i)-\phi_b(c_i)\right)}\geq \frac{\e}{2}\right) + P\left(\abs{\frac{1}{n}\sum_{i=1}^n \phi_b(c_i) - \mathbb{E}_{\phi_b}}\geq \frac{\e}{2}\right).
\end{align*}
Label the two terms above as $T_1$ and $T_2$. To complete the proof, we show that both $T_1$ and $T_2$ are bounded above by $Ke^{-\kappa n \e^2}$. First consider $T_1$. For convenience, we define $\underline{\Delta}^{t,t}_i:=([\Delta^{t,t}_1]_i,...,[\Delta^{t,t}_P]_i)$, for $i=1,...,n$. 
\begin{align*}
T_1 &\overset{(a)}{\leq} P\left( \frac{1}{n}\sum_{i=1}^n \left(1 + \|a_i\| + \|c_i\|\right)\|a_i-c_i\| \geq \frac{\e}{2L} \right)\overset{(b)}{\leq}P\left( \frac{1}{n}\sum_{i=1}^n \left(1 + \|\underline{\Delta}_i^{t,t}\| + 2\|c_i\|\right)\|\underline{\Delta}_i^{t,t}\| \geq \frac{\e}{2L} \right)\\
&\overset{(c)}{\leq} P\left( \sqrt{1+\sum_{u=1}^P \frac{\|\Delta_u^{t,t}\|^2}{n} + 4\frac{\|c\|^2}{n}} \sqrt{\sum_{u=1}^P \frac{\|\Delta_u^{t,t}\|^2}{n}} \geq \frac{\e}{2\sqrt{3}L}\right)\\
&\leq P\left( \left(1+\sum_{u=1}^P \frac{\|\Delta_u^{t,t}\|}{\sqrt{n}} + 2\frac{\|c\|}{\sqrt{n}} \right)\left(\sum_{u=1}^P \frac{\|\Delta_u^{t,t}\|}{\sqrt{n}}\right) \geq \frac{\e}{2\sqrt{3}L}\right),
\end{align*}
where step $(a)$ holds because $\phi_b$ is PL(2), step $(b)$ holds because $\|a_i\|\leq \|c_i\|+\|\underline{\Delta}_i^{t,t}\|$ and $\|a_i-c_i\| = \|\underline{\Delta}_i^{t,t}\|$, and step $(c)$ holds by Cauchy-Schwarz followed by Lemma \ref{lem:squaredsums}. Further, $\|c\|$ can be bounded as follows by applying Lemma \ref{lem:squaredsums},
\begin{align*}
\|c\|^2  \leq \sum_{u=1}^P \sum_{r=0}^{t-1} \|b_u^r\|^2 + 2\sum_{u=1}^P\sum_{r=0}^{t-1}\sum_{l=0}^{t-1}\hat{\gamma}^t_{u,r}\hat{\gamma}^t_{u,l} (b_u^r)^*b_u^l + 2\sum_{u=0}^P (\sigma_u^t)^2_{\perp} \|Z_u^{'t}\|^2 + \|w\|^2.
\end{align*}
Denote the RHS of the above by $\tilde{\mathsf{c}}$. We will show that $\frac{1}{n}\tilde{\mathsf{c}}$ concentrates to $\mathbb{E}_{\tilde{\mathsf{c}}}$ defined as
\begin{equation*}
\mathbb{E}_{\tilde{c}}:=\sum_{u=1}^P\sum_{r=0}^{t-1} \tilde{E}_u^{r,r} + 2\sum_{u=1}^{P}\sum_{r=0}^{t-1}\sum_{l=0}^{t-1} \hat{\gamma}^t_{u,r}\hat{\gamma}^t_{u,l} \tilde{E}_u^{r,l} + 2\sum_{u=1}^P (\sigma_u^t)^2_{\perp} + \sigma_W^2 = \sum_{u=1}^P \left((\sigma_u^l)^2 +2(\sigma_u^t)^2\right) + \sigma_W^2.
\end{equation*} 
In the above, the last inequality uses the definitions in Section \ref{subsec:conc_const} and follows from
\begin{equation}
\sum_{r=0}^{t-1}\sum_{l=0}^{t-1} \hat{\gamma}^t_{u,r}\hat{\gamma}^t_{u,l} \tilde{E}_u^{r,l} = (\hat{\gamma}^t_u)^*\tilde{C}_u^t\hat{\gamma}^t_u = [(\tilde{E}_u^t)^*(\tilde{C}_u^t)^{-1}]\tilde{C}_u^t [(\tilde{C}_u^t)^{-1}\tilde{E}_u^t] = (\tilde{E}_u^t)^*(\tilde{C}_u^t)^{-1}\tilde{E}_u^t = (\sigma_u^t)^2 - (\sigma_u^t)^2_\perp.
\label{eq:Btb3_double_sum}
\end{equation}
To see the concentration, let $\mathsf{a}=Pt+t^2+P+1$,
\begin{align}
&P\left(\abs{\frac{1}{n}\tilde{\mathsf{c}} - \mathbb{E}_{\tilde{\mathsf{c}}}} \geq \e'\right)\overset{(a)}{\leq} \sum_{u=1}^P\sum_{r=0}^{t-1} P\left(\abs{\frac{\|b_u^r\|^2}{n} - \tilde{E}_u^{r,r}} \geq \frac{\e'}{\mathsf{a}} \right) + \sum_{u=1}^P\sum_{r=0}^{t-1}\sum_{l=0}^{t-1} P\left( \abs{\frac{(b_u^r)^*b_u^l}{n} - \tilde{E}_u^{r,l}} \geq \frac{\e'}{2\mathsf{a}\hat{\gamma}^t_{u,r}\hat{\gamma}^t_{u,l}} \right)\nonumber\\
&+ P\left(\abs{\frac{\|Z_u^{'t}\|^2}{n} - 1} \geq \frac{\e'}{2\mathsf{a}(\sigma_u^t)_{\perp}^2}\right) + P\left(\abs{\frac{\|w\|^2}{n} - \sigma_W^2} \geq \frac{\e'}{\mathsf{a}}\right)\overset{(b)}{\leq} K e^{-\kappa n \e'^2},
\label{eq:Btb3_tildec}
\end{align}
where step $(a)$ follows from Lemma \ref{sums} and step $(b)$ follows from the inductive hypothesis $\mc{B}_0 (d)-\mc{B}_{t-1} (d)$, Lemma \ref{subexp}, and the assumption on $w$. Using this result, $T_1$ can be bounded as follows,
\begin{align*}
T_1 &\leq P\left( \left(1 + \sum_{u=1}^P \frac{\|\Delta_u^{t,t}\|}{\sqrt{n}} +  2\left(\frac{\tilde{\mathsf{c}}}{\sqrt{n}}-\mathbb{E}_{\tilde{\mathsf{c}}}\right)+2\mathbb{E}_{\tilde{\mathsf{c}}} \right) \left(\sum_{u=1}^P \frac{\|\Delta_u^{t,t}\|}{\sqrt{n}} \right)  \geq \frac{\e}{2\sqrt{3}L}\right)\\
&\overset{(a)}{\leq} P\left( \abs{\frac{\tilde{\mathsf{c}}}{\sqrt{n}} - \mathbb{E}_{\tilde{\mathsf{c}}}^{1/2}} \geq \e \right) + \sum_{u=1}^P P\left( \frac{\|\Delta_u^{t,t}\|}{\sqrt{n}} \geq \frac{\e}{2\sqrt{3}LP(4+2\mathbb{E}_{\tilde{\mathsf{c}}}^{1/2})}\right)\overset{(b)}{\leq} Ke^{-\kappa n \e^2} + Ke^{-\kappa n \e^2},
\end{align*}
where step $(a)$ uses Lemma \ref{sums} and step $(b)$ uses \eqref{eq:Btb3_tildec}, the results in $\mc{B}_t (a)$, and Lemma \ref{sqroots}.

Next consider $T_2$. Let $Z=(Z_1,...,Z_P)$ be a vector of i.i.d. $\mc{N}(0,1)$ random variables, and it is independent of $Z_u^{'r}$ for all $r=1,...,t-1$ and $u=1,...,P$. For each $i\in [n]$, define the function $\tilde{\phi_{b,i}}:\mathbb{R}^P\rightarrow\mathbb{R}$ as for $z=(z_1,...,z_P)\in\mathbb{R}^P$,
\begin{equation*}
\tilde{\phi}_{b,i}(z):=\phi_b(b_{1,i}^0,...,b_{P,i}^0,...,b_{1,i}^{t-1},...,b_{P,i}^{t-1},\sum_{r=0}^{t-1}\hat{\gamma}^t_{1,r}b_1^r + (\sigma_1^t)_{\perp}z_1,...,\sum_{r=0}^{t-1}\hat{\gamma}^t_{P,r}b_P^r + (\sigma_P^t)_{\perp}z_P,w_i).
\end{equation*}
Then $\tilde{\phi}_{b,i}$ is PL(2) for all $i\in[n]$.
For brevity, let $Z_i' := (Z_{1,i}^{'t},...,Z_{P,i}^{'t})$, for all $i\in [n]$, then
\begin{align*}
T_2 \leq P\left(\abs{\frac{1}{n}\sum_{i=1}^n \left(\tilde{\phi}_{b,i}(Z_i') - \mathbb{E}_Z [\tilde{\phi}_{b,i}(Z)\right)} \geq \frac{\e}{4}\right) + P\left(\abs{\frac{1}{n}\sum_{i=1}^n \mathbb{E}_Z [\tilde{\phi}_{b,i}(Z) - \mathbb{E}_{\phi_b}} \geq \frac{\e}{4}\right).
\end{align*}
Label the two terms in the above as $T_1$ and $T_2$. $T_1$ is bounded above by $Ke^{-\kappa n \e^2}$ using Lemma \ref{lem:subgauss_vec}. $T_2$ can be written as 
\begin{equation*}
T_2 = P\left( \abs{\frac{1}{n}\sum_{i=1}^n \phi_b'(b_{1,i}^0,...,b_{P,i}^0,...,b_{1,i}^{t-1},...,b_{P,i}^{t-1},w_i) - \mathbb{E}_{\phi_b}} \geq \frac{\e}{4} \right),
\end{equation*} 
where the function $\phi_b':\mathbb{R}^{P(t+1)+1}\rightarrow\mathbb{R}$ is defined as
\begin{align*}
&\phi_b'(b_{1,i}^0,...,b_{P,i}^0,...,b_{1,i}^{t-1},...,b_{P,i}^{t-1},w_i):=\mathbb{E}_Z[\tilde{\phi}_{b,i}(Z)]\\
&=\mathbb{E}_Z\left[ \phi_b(b_{1,i}^0,...,b_{P,i}^0,...,b_{1,i}^{t-1},...,b_{P,i}^{t-1},\sum_{r=0}^{t-1}\hat{\gamma}^t_{1,r}b_{1,i}^r+(\sigma_1^t)_{\perp}Z_1,...,\sum_{r=0}^{t-1}\hat{\gamma}^t_{P,r}b_{P,i}^r+(\sigma_P^t)_{\perp}Z_P,w_i) \right]
\end{align*}
Then $\phi_b'$ is PL(2) by Lemma \ref{lem:PLexamples}. By inductive hypothesis $\mc{B}_0 (b)(iii)-\mc{B}_{t-1}(b)(iii)$, we have
\begin{align*}
&\frac{1}{n}\sum_{i=1}^n \phi_b'(b_{1,i}^0,...,b_{P,i}^0,...,b_{1,i}^{t-1},...,b_{P,i}^{t-1},w_i) \doteq \mathbb{E}\left[\phi_b'(\sigma_1^0\breve{Z}_1^0,...,\sigma_P^0\breve{Z}_P^0,...,\sigma_1^{t-1}\breve{Z}_1^{t-1},...,\sigma_P^{t-1}\breve{Z}_P^{t-1},W)\right]\\
&=\mathbb{E}\left[ \phi_b(\sigma_1^0\breve{Z}_1^0,...,\sigma_P^0\breve{Z}_P^0,...,\sigma_1^{t-1}\breve{Z}_1^{t-1},...,\sigma_P^{t-1}\breve{Z}_P^{t-1},\sum_{r=0}^{t-1}\hat{\gamma}^t_{1,r}\sigma_1^{r}\breve{Z}_1^{r} + (\sigma_1^t)_{\perp}Z_1,...,\sum_{r=0}^{t-1}\hat{\gamma}^t_{1,r}\sigma_P^{r}\breve{Z}_P^{r}+ (\sigma_P^t)_{\perp}Z_P,W) \right].
\end{align*}
To obtain the desired result, we need to show
\begin{align}
&(\sigma_1^0\breve{Z}_1^0,...,\sigma_P^0\breve{Z}_P^0,...,\sigma_1^{t-1}\breve{Z}_1^{t-1},...,\sigma_P^{t-1}\breve{Z}_P^{t-1},\sum_{r=0}^{t-1}\hat{\gamma}^t_{1,r}\sigma_1^{r}\breve{Z}_1^{r}+ (\sigma_1^t)_{\perp}Z_1,...,\sum_{r=0}^{t-1}\hat{\gamma}^t_{1,r}\sigma_P^{r}\breve{Z}_P^{r}+ (\sigma_P^t)_{\perp}Z_P)\\
&\qquad\qquad\overset{d}{=} (\sigma_1^0\breve{Z}_1^0,...,\sigma_P^0\breve{Z}_P^0,...,\sigma_1^t\breve{Z}_1^t,...,\sigma_P^t\breve{Z}_P^t).
\label{eq:Btb3_equal_d}
\end{align}
Notice that these are zero-mean jointly Gaussian random vectors, hence we only need to demonstrate that there covariance matrices are equal. For the coordinates on the diagonal, we need to show
\begin{equation}
\mathbb{E}\left[ (\sum_{r=0}^{t-1}\hat{\gamma}^r_{p,r}\sigma_p^{r}\breve{Z}_p^{r}+ (\sigma_p^t)_{\perp}Z_p)^2 \right] = \mathbb{E}\left[ (\sigma_p^t\breve{Z}_p^t)^2 \right]=(\sigma_p^t)^2,\quad\forall p=1,...,P.
\end{equation}
This is true since
\begin{equation*}
\mathbb{E}\left[ (\sum_{r=0}^{t-1}\hat{\gamma}^r_{p,r}\sigma_p^{r}\breve{Z}_p^{r}+ (\sigma_p^t)_{\perp}Z_p)^2 \right] \overset{(a)}{=} \sum_{r=0}^{t-1}\sum_{l=0}^{t-1} \hat{\gamma}^t_{p,r} \hat{\gamma}^t_{p,l}\tilde{E}_p^{r,l} + (\sigma_p^t)_{\perp}^2\overset{(b)}{=}(\sigma_p^t)^2,
\end{equation*}
where step $(a)$ follows from the definition in Section \ref{subsec:conc_const} and step $(b)$ follows from \eqref{eq:Btb3_double_sum}. Next consider the off-diagonal terms. Notice that $\breve{Z}_p^r$ is independent of $\breve{Z}_q^l$ for all $0\leq r,l\leq t$ whenever $p\neq q$ and also the i.i.d. random vector $(Z_1,...Z_P)$ is independent of $\breve{Z}_p^r$ for all $1\leq p\leq P, 0\leq r\leq t$. Hence, we only need to show for all $0\leq l \leq t-1$, 
\begin{equation*}
\mathbb{E}\left[ (\sigma_p^l\breve{Z}_p^l)(\sum_{r=0}^{t-1}\hat{\gamma}^t_{p,r}\sigma_p^{r}\breve{Z}_p^{r}+ (\sigma_p^t)_{\perp}Z_p) \right] = \mathbb{E}\left[(\sigma_p^l\breve{Z}_p^l)(\sigma_p^t\breve{Z}_p^t)\right] = \tilde{E}_p^{l,t}.
\end{equation*}
This is true since
\begin{align*}
\mathbb{E}\left[ (\sigma_p^l\breve{Z}_p^l)(\sum_{r=0}^{t-1}\hat{\gamma}^t_{p,r}\sigma_p^{r}\breve{Z}_p^{r}+ (\sigma_p^t)_{\perp}Z_p) \right]\overset{(a)}{=}\sum_{r=0}^{t-1}\hat{\gamma}^t_{p,r}\tilde{E}_p^{l,t} \overset{(b)}{=} [\tilde{C}^t_p\hat{\gamma}^t_p]_{l+1} \overset{(c)}{=} \tilde{E}_p^{l,t},
\end{align*}
where step $(a)$, $(b)$, and $(c)$ uses definition for $\tilde{E}_p^{l,r}$, $\tilde{C}_p^t$, and $\hat{\gamma}^t_p$, respectively, in Section \ref{subsec:conc_const}. Hence, we have proved \eqref{eq:Btb3_equal_d}.

\textbf{(c)} Apply the result in $\mc{B}_t (b)(iii)$ to PL(2) function $\phi_b(b_{1,i}^0,...,b_{P,i}^0,...,b_{1,i}^t,...,b_{P,i}^t,w_i):=b_{1,i}^tw_i$, then we have
\begin{equation*}
\frac{1}{n}\sum_{i=1}^n b_{1,i}^tw_i \doteq \mathbb{E}\left[\sigma_1^t\breve{Z}_1^tW\right]=0.
\end{equation*}

\textbf{(d)} Apply the result in $\mc{B}_t (b)(iii)$ to PL(2) function $\phi_b(b_{1,i}^0,...,b_{P,i}^0,...,b_{1,i}^t,...,b_{P,i}^t,w_i):=b_{1,i}^r b_{1,i}^t$, then we have
\begin{equation*}
\frac{1}{n} \sum_{i=1}^n b_{1,i}^r b_{1,i}^t \doteq \sigma_1^r\sigma_1^t\mathbb{E}\left[\breve{Z}_1^r\breve{Z}_1^t\right] = \tilde{E}_1^{r,t}. 
\end{equation*}

\textbf{(e)} Define the function $g:\mathbb{R}^{P+1}\rightarrow\mathbb{R}$ as $g(x)=\sum_{i=1}^{P} x_i - x_{P+1}$ for $x=(x_1,...,x_{P+1})\in\mathbb{R}^{P+1}$. Notice that $g$ is  Lipschitz.  Recall the definition of $m_1^t$ in \eqref{eq:def_htbt}, we have $m_{1,i}^t = g(b_{2,i}^{\theta(t)},...,b_{P,i}^{\theta(t)},b_{1,i}^t,w_i)$. Notice that $\theta(t)$ equals to some integer that is less than or equal to $t$.  Therefore, the function defined as $\phi_b(b_{1,i}^0,...,b_{P,i}^0,...,b_{1,i}^t,...,b_{P,i}^t,w_i):=g(b_{2,i}^{\theta(r)},...,b_{P,i}^{\theta(r)},b_{1,i}^r,w_i)g(b_{2,i}^{\theta(t)},...,b_{P,i}^{\theta(t)},b_{1,i}^t,w_i)$ is PL(2) by Lemma \ref{lem:Lprods}. Applying the result in $\mc{B}_t (b)(iii)$, we have
\begin{align*}
\frac{1}{n}\sum_{i=1}^n m_{1,i}^r m_{1,i}^t &\doteq \mathbb{E}\left[ g(\sigma_2^{\theta(r)}\breve{Z}_2^{\theta(r)},...,\sigma_P^{\theta(r)}\breve{Z}_P^{\theta(r)},\sigma_1^r\breve{Z}_1^r,W)g(\sigma_2^{\theta(t)}\breve{Z}_2^{\theta(t)},...,\sigma_P^{\theta(t)}\breve{Z}_P^{\theta(t)},\sigma_1^t\breve{Z}_1^t,W) \right]\\
&=\tilde{E}_1^{r,t} + \sum_{u=2}^P \tilde{E}_u^{\theta(r),\theta(t)} + \sigma_W^2 = \breve{E}_1^{r,t}. 
\end{align*}

\textbf{(f)} Using the function $g$ defined above and applying $\mc{B}(b)(iii)$, we have
\begin{align*}
\frac{1}{n} \sum_{i=1}^n b_{1,i}^r m_{1,i}^t &= \frac{1}{n} \sum_{i=1}^n b_{1,i}^r g(b_{2,i}^{\theta(t)},...,b_{P,i}^{\theta(t)},b_{1,i}^t,w_i) \doteq \mathbb{E}\left[\sigma_1^r\breve{Z}_1^r g(\sigma_2^{\theta(t)}\breve{Z}_2^{\theta(t)},...,\sigma_P^{\theta(t)}\breve{Z}_P^{\theta(t)},\sigma_1^t\breve{Z}_1^t,W) \right]\\
 &= \mathbb{E}\left[\sigma_1^r\breve{Z}_1^r\left(\sum_{u=2}^P \sigma_u^{\theta(0)} \breve{Z}_u^{\theta(0)} + \sigma_1^t \breve{Z}_u^{t} - W\right)\right]=\tilde{E}_1^{r,t}.
\end{align*}
The proof for $\frac{1}{n} \sum_{i=1}^n b_{1,i}^t m_{1,i}^r \doteq  \tilde{E}_1^{r,t}$ is similar. 

The proof for \textbf{(g)} and \textbf{(h)} is the same as in \cite{RushV16}.

\subsection{Step 4: Showing $\mc{H}_{t+1}$ holds.}
The proof is the same as in \cite{RushV16} and is not repeated hear.

\appendices
\section{Sub-Gaussian Concentration Lemmas}

\begin{applem}\cite{RushV16}
\label{lem:normalconc}
For a standard Gaussian random variable $Z$ and  $\e > 0$,
$P\left( \abs{Z} \geq \e \right) \leq 2e^{-\frac{1}{2}\e^2}$.
\end{applem}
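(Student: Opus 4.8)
The plan is to prove the two one-sided tail bounds separately and then combine them by a union bound, using the classical Chernoff (exponential moment) method, which is what produces exactly the stated constant $2$. First I would recall that for a standard Gaussian $Z$ the moment generating function is $\mathbb{E}[e^{\lambda Z}] = e^{\lambda^2/2}$ for every $\lambda \in \mathbb{R}$, a fact obtained by completing the square inside the Gaussian integral.

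For the upper tail I would fix $\lambda > 0$ and apply Markov's inequality to the nonnegative random variable $e^{\lambda Z}$, which yields
\[
P(Z \geq \epsilon) = P\big(e^{\lambda Z} \geq e^{\lambda \epsilon}\big) \leq e^{-\lambda \epsilon}\,\mathbb{E}[e^{\lambda Z}] = e^{\lambda^2/2 - \lambda \epsilon}.
\]
I would then minimize the exponent $\lambda^2/2 - \lambda \epsilon$ over $\lambda > 0$; the minimizer is $\lambda = \epsilon$, and substituting it gives $P(Z \geq \epsilon) \leq e^{-\epsilon^2/2}$. Since $-Z$ is also a standard Gaussian, the identical argument gives $P(Z \leq -\epsilon) \leq e^{-\epsilon^2/2}$, and because $\{|Z| \geq \epsilon\} = \{Z \geq \epsilon\} \cup \{Z \leq -\epsilon\}$, a union bound delivers $P(|Z| \geq \epsilon) \leq 2 e^{-\epsilon^2/2}$, which is the claim.

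As this is a textbook Gaussian tail estimate, there is no genuine obstacle; the only step requiring a moment's care is the optimization $\lambda = \epsilon$ in the Chernoff exponent, which is precisely what produces the factor $\tfrac{1}{2}$ in the exponent and hence matches the stated constant. I note in passing that a direct estimate of the tail integral (substituting $x = y + \epsilon$ and bounding $e^{-y\epsilon} \leq 1$ for $y \geq 0$) would give the sharper one-sided bound $P(Z \geq \epsilon) \leq \tfrac{1}{2} e^{-\epsilon^2/2}$, and thus even $P(|Z| \geq \epsilon) \leq e^{-\epsilon^2/2}$; but the Chernoff route is the cleanest way to land exactly on the constant $2$ as stated, so I would present that argument.
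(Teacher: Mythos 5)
Your proof is correct: the Chernoff/Markov argument with the Gaussian moment generating function $\mathbb{E}[e^{\lambda Z}]=e^{\lambda^2/2}$, optimized at $\lambda=\epsilon$, followed by symmetry and a union bound, is the standard route to the stated bound $P(|Z|\geq\epsilon)\leq 2e^{-\epsilon^2/2}$. The paper itself gives no proof of this lemma --- it is imported by citation from \cite{RushV16} --- so there is nothing to compare against beyond noting that your argument is the textbook one and that your closing remark about the sharper constant via direct estimation of the tail integral is also accurate.
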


\begin{applem}\cite[$\chi^2$-concentration]{RushV16}
For  $Z_i$, $i \in [n]$ that are i.i.d. $\sim \mc{N}(0,1)$, and  $0 \leq \e \leq 1$,
\[P\left(\left \lvert \frac{1}{n}\sum_{i=1}^n Z_i^2 - 1\right \lvert \geq \e \right) \leq 2e^{-n \e^2/8}.\]
\label{subexp}
\end{applem}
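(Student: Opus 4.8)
The plan is to prove this two-sided tail bound by treating the upper and lower deviations separately with the Chernoff (exponential Markov) method, since $S_n := \sum_{i=1}^n Z_i^2$ is a sum of $n$ i.i.d.\ random variables whose moment generating function is available in closed form. First I would record that for a single standard Gaussian $Z$ and any $\lambda < 1/2$ one has $\mathbb{E}[e^{\lambda Z^2}] = (1-2\lambda)^{-1/2}$, so by independence $\mathbb{E}[e^{\lambda S_n}] = (1-2\lambda)^{-n/2}$. Then I would split the event of interest as $P(|S_n/n - 1| \geq \epsilon) \leq P(S_n \geq n(1+\epsilon)) + P(S_n \leq n(1-\epsilon))$ and bound each summand.

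For the upper tail, applying Markov's inequality to $e^{\lambda S_n}$ with $\lambda > 0$ gives the bound $\exp\bigl(-n[\lambda(1+\epsilon) + \tfrac12\log(1-2\lambda)]\bigr)$; optimizing over $\lambda$ yields $\lambda^\ast = \tfrac{\epsilon}{2(1+\epsilon)}$ and the exponent $-\tfrac{n}{2}[\epsilon - \log(1+\epsilon)]$. For the lower tail I would use a negative tilt $\lambda = -\mu$ with $\mu > 0$, for which $\mathbb{E}[e^{-\mu S_n}] = (1+2\mu)^{-n/2}$, and optimize similarly to obtain the exponent $-\tfrac{n}{2}[-\epsilon - \log(1-\epsilon)]$.

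The remaining, and really the only delicate, step is to show that both optimized exponents dominate $\epsilon^2/8$ uniformly on $[0,1]$. For the lower tail this is immediate from the series $-\log(1-\epsilon) = \sum_{j\geq 1}\epsilon^j/j \geq \epsilon + \epsilon^2/2$, giving $-\epsilon - \log(1-\epsilon) \geq \epsilon^2/2$ and hence a bound $e^{-n\epsilon^2/4}$. The upper tail is the place where a crude Taylor truncation fails near $\epsilon = 1$, so instead I would set $h(\epsilon) := \epsilon - \log(1+\epsilon) - \epsilon^2/4$, compute $h'(\epsilon) = \tfrac{\epsilon(1-\epsilon)}{2(1+\epsilon)} \geq 0$ on $[0,1]$, and use $h(0) = 0$ to conclude $h \geq 0$, i.e.\ $\epsilon - \log(1+\epsilon) \geq \epsilon^2/4$ throughout $[0,1]$; this yields an upper-tail bound $e^{-n\epsilon^2/8}$. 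Combining, $P(|S_n/n - 1| \geq \epsilon) \leq e^{-n\epsilon^2/8} + e^{-n\epsilon^2/4} \leq 2e^{-n\epsilon^2/8}$, which is exactly the claim. The main obstacle, as flagged, is verifying the uniform exponent bound on the full range $[0,1]$ rather than merely for small $\epsilon$; the monotonicity argument for $h$ is precisely what makes the stated constant $1/8$ hold all the way up to $\epsilon = 1$.
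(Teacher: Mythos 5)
Your proof is correct, and there is nothing in the paper to compare it against: the paper does not prove this lemma but imports it verbatim from \cite{RushV16}, so your self-contained Chernoff/MGF derivation is exactly the kind of argument the citation is standing in for (and it is essentially the standard proof of this bound). The decomposition into upper and lower tails, the closed-form MGF $(1-2\lambda)^{-n/2}$, the optimized exponents $\tfrac{n}{2}\left[\epsilon - \log(1+\epsilon)\right]$ and $\tfrac{n}{2}\left[-\epsilon - \log(1-\epsilon)\right]$, and, crucially, the uniform inequality $\epsilon - \log(1+\epsilon) \geq \epsilon^2/4$ on $[0,1]$ obtained from $h'(\epsilon) = \tfrac{\epsilon(1-\epsilon)}{2(1+\epsilon)} \geq 0$ are all verified correctly; you are right that this last monotonicity step is what makes the constant $1/8$ valid on the whole range rather than only for small $\epsilon$. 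One pedantic point: at $\epsilon = 1$ your lower-tail treatment degenerates (the optimizer $\mu^\ast = \tfrac{\epsilon}{2(1-\epsilon)}$ blows up and the series bound for $-\log(1-\epsilon)$ is vacuous), but there the event is $\{S_n \leq 0\}$, which has probability zero since $S_n > 0$ almost surely; it would be cleanest to run the lower-tail argument for $\epsilon \in [0,1)$ and dispose of the endpoint separately.
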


%\begin{applem}
%\label{lemma:innerproduct_indep_Gaussian_vec}
%Let $Z_1, Z_2\in\mathbb{R}^n$ be two random vectors with i.i.d. normal entries, and $Z_1$ is independent of $Z_2$.
%Then $\forall\epsilon\in (0,1)$, we have 
%\begin{equation*}
%P\left(\left\vert\frac{Z_1^*Z_2}{n}\right\vert\geq\epsilon\right)\leq Ke^{-\kappa n\e^2},
%\end{equation*} 
%for some $K,\kappa>0$ that do not depend on $n$ or $\epsilon$.
%\end{applem}
%\begin{proof}
%We prove the upper-tail bound
%\begin{equation}
%P\left( \frac{Z_1^*Z_2}{n}\geq\e\right)\leq K e^{-\kappa n \e^2},
%\end{equation}
%and the lower-tail bound can be obtained similarly. Together they provide the desired result.
%
%Using the Cram{\'e}r-Chernoff method:
%\begin{align*}
%P(Z_1^*Z_2\geq n\epsilon)&=P(e^{\lambda Z_1^*Z_2}\geq e^{\lambda n\epsilon})\leq e^{-\lambda n\epsilon} \mathbb{E}\left[e^{\lambda Z_1^*Z_2} \right],\quad\forall \lambda>0\\
%&=e^{-\frac{n}{2}\ln(1-\lambda^2)-\lambda n\epsilon}\overset{(a)}{\leq} e^{\frac{n}{2}\frac{\lambda^2}{2}\frac{2-\lambda^2}{1-\lambda^2}-\lambda n\epsilon},\quad\forall 0<\lambda<1,\\
%&\overset{(b)}{=} e^{-\frac{n\epsilon^2}{2}\left(1-\frac{1}{32}\frac{8-\epsilon^2}{4-\epsilon^2}\right)} \leq e^{-\frac{11}{24}n\epsilon^2}.
%\end{align*}
%Step $(a)$ uses $\ln(1+x)\geq \frac{x}{2}\frac{2+x}{1+x},\forall x\in (-1,0]$. Step $(b)$ follows by letting $\lambda=\e/2$.
%\end{proof}

\begin{applem}
\cite{BLMConc} Let $X$ be a centered sub-Gaussian random variable with variance factor $\nu$, i.e., $\ln \expec[e^{tX}] \leq \frac{t^2 \nu}{2}$, $\forall t \in \mathbb{R}$. Then $X$ satisfies:
\begin{enumerate}
\item For all $x> 0$, $P(X > x)  \vee P(X <-x) \leq e^{-\frac{x^2}{2\nu}}$, for all $x >0$.
\item For every integer $k \geq 1$,
\be
\expec[X^{2k}] \leq 2 (k !)(2 \nu)^k \leq (k !)(4 \nu)^k.
\label{eq:subgauss_moments}
\ee
\end{enumerate}
\label{lem:subgauss}
\end{applem}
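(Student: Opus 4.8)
The plan is to treat the two items separately: item~1 (the tail bound) follows directly from the sub-Gaussian hypothesis on the moment generating function via a Chernoff argument, and item~2 (the even-moment bound) is then bootstrapped from item~1 through the standard tail-integration formula for absolute moments.

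For item~1, I would fix $x>0$ and, for an arbitrary $t>0$, apply Markov's inequality to the nonnegative random variable $e^{tX}$ to get $P(X>x)=P(e^{tX}>e^{tx})\leq e^{-tx}\,\expec[e^{tX}]\leq e^{t^2\nu/2-tx}$, where the last step invokes the assumption $\ln\expec[e^{tX}]\leq t^2\nu/2$. Minimizing the exponent $t^2\nu/2-tx$ over $t>0$ selects $t=x/\nu$ and produces $P(X>x)\leq e^{-x^2/(2\nu)}$. For the lower tail I would observe that $-X$ is again centered sub-Gaussian with the same variance factor, since $\ln\expec[e^{t(-X)}]=\ln\expec[e^{(-t)X}]\leq t^2\nu/2$; applying the upper-tail estimate to $-X$ gives $P(X<-x)=P(-X>x)\leq e^{-x^2/(2\nu)}$. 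The maximum of the two bounds is then exactly the claimed control on $P(X>x)\vee P(X<-x)$.

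For item~2, I would begin from the layer-cake identity $\expec[X^{2k}]=\expec[\abs{X}^{2k}]=\int_0^\infty 2k\,x^{2k-1}P(\abs{X}>x)\,dx$. Substituting the tail bound $P(\abs{X}>x)=P(X>x)+P(X<-x)\leq 2e^{-x^2/(2\nu)}$ from item~1 reduces everything to estimating $4k\int_0^\infty x^{2k-1}e^{-x^2/(2\nu)}\,dx$. The change of variables $u=x^2/(2\nu)$ turns this into $2k\,(2\nu)^k\int_0^\infty u^{k-1}e^{-u}\,du=2k\,(2\nu)^k\,\Gamma(k)$, and $\Gamma(k)=(k-1)!$ yields $\expec[X^{2k}]\leq 2k\,(k-1)!\,(2\nu)^k=2\,(k!)\,(2\nu)^k$, the first inequality. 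The second inequality $2(k!)(2\nu)^k\leq(k!)(4\nu)^k$ is immediate because $(4\nu)^k=2^k(2\nu)^k$ and $2\leq 2^k$ for every integer $k\geq1$.

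Both parts are essentially routine, so I do not expect a genuine obstacle; the only point that needs care is the constant bookkeeping in item~2, so that the Gamma-function substitution reproduces the factor $2(k!)(2\nu)^k$ exactly rather than a looser constant. One could alternatively try to extract the even moments by expanding $\expec[e^{tX}+e^{-tX}]=2\sum_{k\geq0}t^{2k}\expec[X^{2k}]/(2k)!$ and comparing with $2e^{t^2\nu/2}=2\sum_{k\geq0}(t^2\nu/2)^k/k!$, but a termwise comparison of these power series requires extra justification (an inequality between the sums does not by itself give an inequality between coefficients), so I would favour the direct tail-integration route over this series approach.
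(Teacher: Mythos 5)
Your proof is correct. The paper does not prove this lemma at all---it is quoted directly from the cited reference \cite{BLMConc}---and your argument (Chernoff bound via Markov's inequality applied to $e^{tX}$ with $t=x/\nu$ for the tails, then the tail-integration identity with the substitution $u=x^2/(2\nu)$ and $\Gamma(k)=(k-1)!$ for the even moments) is precisely the standard proof given in that reference, with the constants $2(k!)(2\nu)^k \leq (k!)(4\nu)^k$ accounted for exactly.
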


\begin{applem}
\label{lem:subgauss_vec}
Let $\{Z_i\}_{i\in [n]}\in\mathbb{R}^d$ be a sequence of i.i.d. random vectors on $\mathbb{R}^d$, where $d$ is an integer and each $Z_i$ has i.i.d. sub-Gaussian entries with variance factor $\nu$. The functions $\{f_i\}_{i\in [n]}:\mathbb{R}^d \rightarrow \mathbb{R}$ are pseudo-Lipschitz of order 2 each with pseudo-Lipschitz constant $L_i$. Let $L:=\max_{i\in[n]}L_i$. Then for all $\e \in (0,1)$, there exist $K,\kappa>0$ depending on $d,L$ but not $n,\e$ such that
\begin{equation}
P\left( \abs{\frac{1}{n}\sum_{i=1}^n \left( f_i(Z_i) - \mathbb{E}\left[f_i(Z_i)\right] \right)} \geq \e \right)\leq K e^{-\kappa n \e^2}.
\end{equation}
\end{applem}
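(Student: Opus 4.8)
The plan is to recognize that $\frac{1}{n}\sum_{i=1}^n Y_i$, with $Y_i := f_i(Z_i)-\expec[f_i(Z_i)]$, is a normalized sum of independent, centered random variables, and to prove a Bernstein-type tail bound for it; the restriction $\e\in(0,1)$ is then exactly what converts the sub-exponential Bernstein tail into the claimed sub-Gaussian form $Ke^{-\kappa n\e^2}$. Since subtracting the constant $f_i(0)$ leaves $Y_i$ unchanged, I would first replace $f_i$ by $g_i:=f_i-f_i(0)$ and work with $g_i(Z_i)$. The pseudo-Lipschitz property of order $2$ gives the pointwise bound $\abs{g_i(z)}=\abs{f_i(z)-f_i(0)}\le L(1+\norm{z})\norm{z}\le L(1+2\norm{z}^2)$, uniformly in $i$, where $L=\max_i L_i$ is precisely the quantity that makes this bound index-free.

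Next I would establish uniform moment bounds on $g_i(Z_i)$. Since $\norm{Z_i}^2=\sum_{j=1}^d Z_{i,j}^2$ is a sum of $d$ squared sub-Gaussian coordinates, the moment inequality \eqref{eq:subgauss_moments} of Lemma \ref{lem:subgauss} gives $\expec[Z_{i,j}^{2k}]\le k!\,(4\nu)^k$, and combining these across the $d$ coordinates yields $\expec[\norm{Z_i}^{2k}]\le k!\,C_0^k$ for a constant $C_0=C_0(d,\nu)$. Feeding this into $\abs{g_i(Z_i)}\le L(1+2\norm{Z_i}^2)$ produces $\expec[\abs{g_i(Z_i)}^k]\le k!\,C^k$ for some $C=C(d,L,\nu)$, uniformly in $i$, and the centered variables $Y_i$ inherit an analogous bound. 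Thus each $Y_i$ satisfies a Bernstein moment condition with index-independent constants.

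I would then pass from moments to the moment generating function: the bound $\expec[\abs{Y_i}^k]\le k!\,(C')^k$ implies $\expec[e^{\lambda Y_i}]\le \exp(c_1\lambda^2)$ for all $\abs{\lambda}\le c_2$, with $c_1,c_2$ depending only on $C'$. By independence, $\expec[\exp(\lambda\sum_i Y_i)]\le\exp(c_1 n\lambda^2)$ on the same range of $\lambda$, and a Chernoff bound gives $P(\tfrac{1}{n}\sum_i Y_i\ge \e)\le \exp(c_1 n\lambda^2-n\lambda\e)$. Optimizing over $\lambda\in(0,c_2]$ yields the two-regime Bernstein inequality $P(\abs{\tfrac{1}{n}\sum_i Y_i}\ge\e)\le 2\exp(-\kappa_0\, n\min\{\e^2,\e\})$, after also treating the lower tail by the symmetric argument applied to $-Y_i$.

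Finally, invoking the hypothesis $\e\in(0,1)$ gives $\min\{\e^2,\e\}=\e^2$, so (after absorbing the $C'$-dependent factors into a single constant) the bound collapses to $Ke^{-\kappa n\e^2}$ with $K,\kappa$ depending on $d,L,\nu$ but not on $n$ or $\e$, as claimed. The main obstacle is the moment step: because $f_i$ is pseudo-Lipschitz of order two, $g_i(Z_i)$ is genuinely sub-exponential rather than sub-Gaussian, so its moments grow like $k!\,C^k$ rather than $(k!)^{1/2}C^k$; the care lies in controlling this growth uniformly in $i$ and in verifying that the Bernstein range $\e\le c_2$ is always covered once $\e<1$, so that the quadratic regime of the tail is the one that governs.
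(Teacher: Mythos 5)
Your proof is correct, and it shares the paper's overall skeleton: both arguments run the Cram\'er--Chernoff method, establish a quadratic bound $\expec[e^{r\sum_i Y_i}]\leq e^{\kappa' n r^2}$ only on a bounded interval of the tilting parameter $r$, and then use $\e\in(0,1)$ to guarantee the optimizer $r=\e/(2\kappa')$ stays inside that interval so the sub-Gaussian regime of the Bernstein tail is the one that applies. Where you diverge is in how the moment generating function is controlled. The paper never bounds $\expec[f_i(Z_i)]$ or the individual moments of $f_i(Z_i)$ directly: it introduces an independent copy $\{\tilde{Z}_i\}$, uses Jensen's inequality to multiply in the factor $\expec[\exp(-r\sum_i(f_i(\tilde{Z}_i)-\expec f_i(\tilde{Z}_i)))]\geq 1$ for free, and thereby reduces everything to the symmetrized differences $f_i(Z_i)-f_i(\tilde{Z}_i)$, whose odd moments vanish and whose even moments are bounded by applying the PL(2) inequality to the pair $(Z_i,\tilde{Z}_i)$ together with the sub-Gaussian moment bound of Lemma \ref{lem:subgauss}. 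You instead recenter at the origin via $g_i:=f_i-f_i(0)$, exploit the pointwise bound $\abs{g_i(z)}\leq L(1+2\norm{z}^2)$ to get uniform factorial moment growth $\expec[\abs{Y_i}^k]\leq k!\,(C')^k$, and then invoke the standard Bernstein moment-condition machinery; the centering term $\expec[g_i(Z_i)]$ is absorbed by Jensen. Both routes lean on the same key ingredient (the sub-Gaussian moment bound \eqref{eq:subgauss_moments}) and yield constants of the same quality; yours is the more standard textbook path and makes the two-regime $\min\{\e^2,\e\}$ structure explicit, while the paper's symmetrization is slightly slicker in that it never has to handle the mean or the odd-order terms at all.
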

\begin{proof}
In what follows, we prove the upper-tail bound
\begin{equation}
P\left( \frac{1}{n}\sum_{i=1}^n \left( f_i(Z_i) - \mathbb{E}\left[f_i(Z_i)\right] \right) \geq \e \right)\leq K e^{-\kappa n \e^2}.
\label{eq:lem_subgauss_vec_1}
\end{equation}
The lower-tail bound can then be obtained similarly. Together they prove the desired result.

Using the Cram{\' e}r-Chernoff method:
\begin{equation}
P\left(\sum_{i=1}^n \left( f_i(Z_i) - \mathbb{E}\left[f_i(Z_i)\right] \right) \geq n\e \right)\leq e^{-nr\e}\mathbb{E}\left[\exp\left(r\sum_{i=1}^n \left( f_i(Z_i) - \mathbb{E}\left[f_i(Z_i)\right] \right)\right) \right],\quad \forall r>0.
\label{eq:lem_subgauss_vec_2}
\end{equation}

Next we will show that there exist $\kappa'>0$ such that the expectation in \eqref{eq:lem_subgauss_vec_2} is bounded by
\begin{equation}
\mathbb{E}\left[\exp\left(r\sum_{i=1}^n \left( f_i(Z_i) - \mathbb{E}\left[f_i(Z_i)\right] \right)\right) \right] \leq e^{\kappa' n r^2},\quad\text{for}\quad 0 < r \leq \left(5\sqrt{2}L(d\nu+12d^2\nu^2)^{1/2}\right)^{-1}.
\label{eq:lem_subgauss_vec_3}
\end{equation}
Then plugging \eqref{eq:lem_subgauss_vec_3} into \eqref{eq:lem_subgauss_vec_2}, and choosing the optimal $r=\frac{\e}{2\kappa'}$, we can ensure that $r$ falls within the effective region defined in \eqref{eq:lem_subgauss_vec_2} for all $\e \in (0,1)$ by choosing $\kappa'$ big enough.

Now we show \eqref{eq:lem_subgauss_vec_3}. Notice that 
\begin{equation}
\mathbb{E}\left[\exp\left(-r\sum_{i=1}^n \left( f_i(Z_i) - \mathbb{E}\left[f_i(Z_i)\right] \right)\right) \right] \overset{(a)}{\geq} \exp\left(-r\mathbb{E}\left[\sum_{i=1}^n \left( f_i(Z_i) - \mathbb{E}\left[f_i(Z_i)\right] \right)\right]\right) = 1,
\label{eq:lem_subgauss_vec_4}
\end{equation}
where step $(a)$ follows from Jensen's inequality. Let $\{\tilde{Z}_i\}_{i\in[n]}$ be an independent copy of $\{Z_i\}_{i\in [n]}$. Then we have
\begin{align}
&\mathbb{E}\left[\exp\left\{r\sum_{i=1}^n \left( f_i(Z_i) - \mathbb{E}\left[f_i(Z_i)\right] \right)\right\} \right]\cdot 1  \nonumber\\
&\overset{(a)}{\leq} \mathbb{E}\left[\exp\left\{r\sum_{i=1}^n \left( f_i(Z_i) - \mathbb{E}\left[f_i(Z_i)\right] \right)\right\} \right]\mathbb{E}\left[\exp\left\{-r\sum_{i=1}^n \left( f_i(\tilde{Z}_i) - \mathbb{E}\left[f_i(\tilde{Z}_i)\right] \right)\right\}\right] \nonumber\\
&\overset{(b)}{=} \prod_{i=1}^n \mathbb{E}\left[ \exp\left\{f_i(Z_i) - f_i(\tilde{Z}_i) \right\}  \right] = \prod_{i=1}^n \left(\sum_{k=0}^{\infty}\frac{1}{k!}\mathbb{E}\left[ \left(f_i(Z_i)-f_i(\tilde{Z}_i)\right)^k\right]\right)\nonumber\\
&\overset{(c)}{=} \prod_{i=1}^n \left(\sum_{k=0}^{\infty}\frac{r^{2k}}{(2k)!}\mathbb{E}\left[ \left(f_i(Z_i)-f_i(\tilde{Z}_i)\right)^{2k}\right]\right)
\label{eq:lem_subgauss_vec_5}
\end{align}
In the above, step $(a)$ follows from \eqref{eq:lem_subgauss_vec_4},step $(b)$ uses the fact that $\{\tilde{Z}_i\}$ is independent of $\{Z_i\}$ and also they each contain independent elements, and step $(c)$ holds since the odd order terms are zero. Now we bound the expectation term.
\begin{align}
\mathbb{E}\left[ \left(f_i(Z_i)-f_i(\tilde{Z}_i)\right)^{2k}\right]&\overset{(a)}{\leq} L_i^{2k} \mathbb{E}\left[\left((1+\|Z_i\|+\|\tilde{Z}_i\|)\|Z_i-\tilde{Z}_i\|\right)^{2k}\right]\nonumber\\
&\overset{(b)}{\leq}L^{2k}\mathbb{E}\left[\left(\|Z_i\|+\|\tilde{Z}_i\|+\|Z_i\|^2+\|\tilde{Z}_i\|^2+2\|Z_i\|\|\tilde{Z}_i\|\right)^{2k}\right]\nonumber\\
&\overset{(c)}{\leq} \frac{(5L)^{2k}}{5}\mathbb{E}\left[2\|Z_i\|^{2k}+2\|Z_i\|^{4k}+2^{2k}\|Z_i\|^{2k}\|\tilde{Z}_i\|^{2k}\right]\nonumber\\
&\overset{(d)}{\leq} \frac{(5L)^{2k}}{5} \left( 4(k)!(2d\nu)^{k} +  4(2k)!(2d\nu)^{2k} + 4(k!)^2(4d\nu)^{2k} \right),
\label{eq:lem_subgauss_vec_6}
\end{align}
where step $(a)$ holds due to the pseudo-Lipschitz property, step $(b)$ uses the fact that $L_i\leq L$ and  $\|Z_i-\tilde{Z}_i\|\leq \|Z_i\|+\|\tilde{Z}_i\|$, step $(c)$ uses Lemma \ref{lem:squaredsums}, and step $(d)$ uses Lemma \ref{lem:subgauss} and another application of Lemma \ref{lem:squaredsums}. Plugging \eqref{eq:lem_subgauss_vec_6} into \eqref{eq:lem_subgauss_vec_5}, we have
\begin{align*}
&\mathbb{E}\left[\exp\left\{r\sum_{i=1}^n \left( f_i(Z_i) - \mathbb{E}\left[f_i(Z_i)\right] \right)\right\} \right] \overset{(a)}{\leq} \frac{4}{5}\sum_{k=0}^{\infty} (25r^2L^2)^k \left( (d\nu)^k + (4d^2\nu^2)^k + (8d^2\nu^2)^k\right)\\
&\leq \sum_{k=0}^{\infty} (25L^2)^k \left( d\nu + 12d^2\nu^2\right)^k = \frac{1}{1-25r^2L^2\left( d\nu + 12d^2\nu^2\right)}\overset{(b)}{\leq} \exp\left(50r^2L^2\left( d\nu + 12d^2\nu^2\right) \right),
\end{align*}
where step $(a)$ uses the fact $2^k(k!)^2\leq (2k)!$ and step $(b)$ uses $(1-x)^{-1}\leq e^{2x}$ for $x\in[0,1/2]$, which leads to the effective region for $r$ to be $r\leq \left(5\sqrt{2}L(d\nu+12d^2\nu^2)^{1/2}\right)^{-1}$.
\end{proof}

\section{Algebraic Inequalities}
\begin{applem}\cite[Concentration of Sums]{RushV16}
\label{sums}
If random variables $X_1, \ldots, X_M$ satisfy $P(\abs{X_i} \geq \e) \leq e^{-n\kappa_i \e^2}$ for $1 \leq i \leq M$, then 
\ben
P\left(\abs{  \sum_{i=1}^M X_i } \geq \e\right) \leq \sum_{i=1}^M P\left(|X_i| \geq \frac{\e}{M}\right) \leq M e^{-n (\min_i \kappa_i) \e^2/M^2}.
\een
\end{applem}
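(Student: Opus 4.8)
The plan is to obtain both inequalities from a single set inclusion combined with a union bound. First I would establish the containment of events
\begin{equation*}
\left\{ \abs{\sum_{i=1}^M X_i} \geq \e \right\} \subseteq \bigcup_{i=1}^M \left\{ \abs{X_i} \geq \frac{\e}{M} \right\}.
\end{equation*}
This follows by contraposition together with the triangle inequality: if $\abs{X_i} < \e/M$ held simultaneously for every $i$, then $\abs{\sum_{i=1}^M X_i} \leq \sum_{i=1}^M \abs{X_i} < M\cdot(\e/M) = \e$, so the complementary event on the left cannot occur.

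Next I would apply the union bound to this inclusion. Since the probability of a union is at most the sum of the probabilities, we immediately get the first asserted inequality
\begin{equation*}
P\left( \abs{\sum_{i=1}^M X_i} \geq \e \right) \leq \sum_{i=1}^M P\left( \abs{X_i} \geq \frac{\e}{M} \right).
\end{equation*}

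Finally I would invoke the hypothesis, but evaluated at $\e/M$ in place of $\e$, which gives $P(\abs{X_i} \geq \e/M) \leq e^{-n\kappa_i(\e/M)^2} = e^{-n\kappa_i \e^2/M^2}$ for each $i$. Bounding each exponent crudely by the worst constant, $\kappa_i \geq \min_j \kappa_j$, yields $e^{-n\kappa_i \e^2/M^2} \leq e^{-n(\min_j \kappa_j)\e^2/M^2}$, and summing the $M$ identical bounds produces the factor of $M$ in the claimed estimate $M\,e^{-n(\min_i \kappa_i)\e^2/M^2}$.

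There is no real obstacle here: the statement is an elementary union bound, and the only step requiring any thought is the initial event inclusion, which is where the factor $1/M$ inside the probability (and hence the $1/M^2$ in the exponent) originates. The remaining manipulations are a direct substitution into the hypothesis and a trivial bound on the minimum of the rate constants.
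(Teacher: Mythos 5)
Your proof is correct and is exactly the standard argument for this lemma (which the paper simply cites from \cite{RushV16} without reproving): the event inclusion via the triangle inequality, the union bound, and substitution of the hypothesis at $\e/M$ with the crude bound $\kappa_i \geq \min_j \kappa_j$. Nothing further is needed.
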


\begin{applem}\cite[Concentration of Square Roots]{RushV16}
\label{sqroots}
Let $c \neq 0$. Then
\ben
\text{If } P\left(\left \lvert X_n^2 - c^2 \right \lvert \geq \epsilon \right) \leq e^{-\kappa n \epsilon^2},
\text{ then }
P \left(\left \lvert \abs{X_n} - \abs{c} \right \lvert \geq \epsilon \right) \leq e^{-\kappa n \abs{c}^2 \epsilon^2}.
\een
\end{applem}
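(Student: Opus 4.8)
The plan is to transfer the given concentration bound for $X_n^2$ to one for $\abs{X_n}$ by exploiting the factorization of a difference of squares. The key identity is that, since $X_n^2 = \abs{X_n}^2$ and $c^2 = \abs{c}^2$,
\[
\abs{X_n^2 - c^2} = \abs{\,\abs{X_n}^2 - \abs{c}^2\,} = \abs{\,\abs{X_n} - \abs{c}\,}\cdot\left(\abs{X_n} + \abs{c}\right),
\]
which holds for every realization of $X_n$.

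First I would lower-bound the second factor. Because $c\neq 0$ and $\abs{X_n}\geq 0$, we always have $\abs{X_n}+\abs{c}\geq \abs{c}>0$. Substituting this into the identity yields the pointwise inequality $\abs{X_n^2-c^2} \geq \abs{c}\cdot\abs{\,\abs{X_n}-\abs{c}\,}$. It is important here to bound the factor below by the deterministic quantity $\abs{c}$ rather than by $\abs{X_n}+\abs{c}$, since the latter is random and would be useless; this is precisely where the hypothesis $c\neq 0$ enters.

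Next I would convert the pointwise inequality into an inclusion of events. On the event $\{\abs{\,\abs{X_n}-\abs{c}\,}\geq\epsilon\}$ the inequality above forces $\abs{X_n^2-c^2}\geq\abs{c}\epsilon$, so
\[
\left\{\abs{\,\abs{X_n}-\abs{c}\,}\geq\epsilon\right\}\subseteq\left\{\abs{X_n^2-c^2}\geq\abs{c}\epsilon\right\}.
\]
Taking probabilities and invoking the hypothesis with threshold $\abs{c}\epsilon$ in place of $\epsilon$ gives
\[
P\left(\abs{\,\abs{X_n}-\abs{c}\,}\geq\epsilon\right)\leq P\left(\abs{X_n^2-c^2}\geq\abs{c}\epsilon\right)\leq e^{-\kappa n (\abs{c}\epsilon)^2}=e^{-\kappa n\abs{c}^2\epsilon^2},
\]
which is exactly the claimed bound.

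There is essentially no serious obstacle: the argument reduces to a one-line algebraic observation packaged as an event inclusion. The only point requiring care is the lower bound on $\abs{X_n}+\abs{c}$ discussed above, and the only structural assumption consumed by the proof is $c\neq 0$, without which $\abs{c}^2$ in the exponent would degenerate.
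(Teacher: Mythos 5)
Your proof is correct: the factorization $\lvert X_n^2-c^2\rvert=\lvert\,\lvert X_n\rvert-\lvert c\rvert\,\rvert\,(\lvert X_n\rvert+\lvert c\rvert)\geq \lvert c\rvert\,\lvert\,\lvert X_n\rvert-\lvert c\rvert\,\rvert$ and the resulting event inclusion is exactly the standard argument behind this lemma, which the paper does not reprove but simply imports from \cite{RushV16}. Your handling of the $c\neq 0$ hypothesis and the substitution $\epsilon\mapsto\lvert c\rvert\epsilon$ in the assumed tail bound are both as in the cited source, so there is nothing to add.
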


\begin{applem}\cite[Concentration of Products]{RushV16}
\label{products} 
For random  variables $X,Y$ and non-zero constants $c_X, c_Y$, if
\ben
P\left( \left | X- c_X \right |  \geq \e \right) \leq K e^{-\kappa n \e^2}, \quad \text{ and } \quad P\left( \left | Y- c_Y \right |  \geq \e \right) \leq K e^{-\kappa n \e^2},
\een
then the probability  $P\left( \left | XY - c_Xc_Y \right |  \geq \e \right)$ is bounded by 
\begin{align*}  
&  P\left( \left | X- c_X \right |  \geq \min\left( \sqrt{\frac{\e}{3}}, \frac{\e}{3 c_Y} \right) \right)  +  
P\left( \left | Y- c_Y \right |  \geq \min\left( \sqrt{\frac{\e}{3}}, \frac{\e}{3 c_X} \right) \right) \\
& \qquad \qquad \leq 2K \exp\left\{-\frac{\kappa n \e^2}{9\max(1, c_X^2, c_Y^2)}\right\}.
\end{align*}
\end{applem}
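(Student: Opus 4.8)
The plan is to bound the product deviation by reducing it to separate deviation events for $X$ and $Y$, after which the two hypotheses apply directly. First I would use the algebraic identity
\[
XY - c_Xc_Y = (X-c_X)(Y-c_Y) + c_Y(X-c_X) + c_X(Y-c_Y),
\]
so that by the triangle inequality $\abs{XY-c_Xc_Y} \leq \abs{X-c_X}\,\abs{Y-c_Y} + \abs{c_Y}\,\abs{X-c_X} + \abs{c_X}\,\abs{Y-c_Y}$. Consequently, on the event $\{\abs{XY-c_Xc_Y}\geq\e\}$ at least one of the three summands on the right must be at least $\e/3$.

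Next I would translate each of the three sub-events into an individual deviation bound. If $\abs{X-c_X}\,\abs{Y-c_Y}\geq \e/3$, then the two factors cannot both be smaller than $\sqrt{\e/3}$, so either $\abs{X-c_X}\geq\sqrt{\e/3}$ or $\abs{Y-c_Y}\geq\sqrt{\e/3}$. The remaining two summands give $\abs{X-c_X}\geq \e/(3\abs{c_Y})$ and $\abs{Y-c_Y}\geq\e/(3\abs{c_X})$, respectively. Grouping the resulting conditions by variable and keeping the smaller threshold in each group yields the set inclusion
\[
\{\abs{XY-c_Xc_Y}\geq\e\} \subseteq \left\{\abs{X-c_X}\geq \min\left(\sqrt{\tfrac{\e}{3}},\tfrac{\e}{3\abs{c_Y}}\right)\right\} \cup \left\{\abs{Y-c_Y}\geq \min\left(\sqrt{\tfrac{\e}{3}},\tfrac{\e}{3\abs{c_X}}\right)\right\},
\]
and a union bound over these two events produces the first displayed inequality of the statement.

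Finally I would feed each threshold into the corresponding hypothesis. Writing $\delta_X = \min(\sqrt{\e/3},\e/(3\abs{c_Y}))$, its square is $\delta_X^2 = \min(\e/3,\e^2/(9c_Y^2))$; since $\e\in(0,1)$ gives $\e/3\geq \e^2/3\geq \e^2/(9\max(1,c_X^2,c_Y^2))$ and since $c_Y^2\leq\max(1,c_X^2,c_Y^2)$, both branches of the minimum are at least $\e^2/(9\max(1,c_X^2,c_Y^2))$, so $P(\abs{X-c_X}\geq\delta_X)\leq K\exp\{-\kappa n\e^2/(9\max(1,c_X^2,c_Y^2))\}$, and symmetrically for $Y$. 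Adding the two bounds gives the factor $2K$ in the final estimate. The only point requiring care — and the closest thing to an obstacle in an otherwise elementary argument — is this last exponent bookkeeping: one must use $\e\in(0,1)$ to absorb the $\e/3$ branch of the minimum into an $\e^2$ bound sharing the common denominator $9\max(1,c_X^2,c_Y^2)$, so that both tail probabilities decay at the same rate before they are summed.
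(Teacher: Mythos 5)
Your argument is correct and follows exactly the route the lemma's own statement encodes: the paper cites this result from \cite{RushV16} without reproducing a proof, but the intermediate display in the lemma (the sum of the two tail probabilities with thresholds $\min(\sqrt{\e/3},\e/(3\abs{c_Y}))$ and $\min(\sqrt{\e/3},\e/(3\abs{c_X}))$) is precisely the union-bound decomposition you derive from $XY-c_Xc_Y=(X-c_X)(Y-c_Y)+c_Y(X-c_X)+c_X(Y-c_Y)$. Your exponent bookkeeping, using $\e\in(0,1)$ and $c_X^2,c_Y^2\leq\max(1,c_X^2,c_Y^2)$ to put both tails under the common rate $\e^2/(9\max(1,c_X^2,c_Y^2))$, is also the intended final step, so there is nothing to add.
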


\begin{applem}\cite[Concentration of Scalar Inverses]{RushV16}
\label{inverses} Assume $c \neq 0$ and $0<\e <1$. %$\e < \frac{1}{2}\abs{c}$.
\ben
\text{If } P\left(\left \lvert X_n - c \right \lvert \geq \epsilon \right) \leq e^{-\kappa n \epsilon^2},
\text{ then }
P\left(\left \lvert X_n^{-1} - c^{-1} \right \lvert \geq \epsilon \right) \leq 2 e^{-n \kappa \e^2 c^2 \min\{c^2, 1\}/4}.
\een
\end{applem}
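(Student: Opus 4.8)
The plan is to reduce the claim about $X_n^{-1}$ to the hypothesis about $X_n$ via the elementary identity $\abs{X_n^{-1}-c^{-1}}=\abs{X_n-c}/(\abs{X_n}\abs{c})$. The only nuisance is the \emph{random} factor $\abs{X_n}$ in the denominator: one cannot simply replace $\abs{X_n}\abs{c}$ by $\abs{c}^2$. I would control this by a reverse triangle inequality, noting that on the event $\{\abs{X_n-c}<\abs{c}/2\}$ one has $\abs{X_n}>\abs{c}/2$, so the denominator is bounded below by $\abs{c}^2/2$ there.

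Concretely, I would split the target event according to whether $\abs{X_n}\geq\abs{c}/2$ or $\abs{X_n}<\abs{c}/2$. On $\{\abs{X_n}\geq\abs{c}/2\}$ the identity gives $\abs{X_n^{-1}-c^{-1}}\leq 2\abs{X_n-c}/\abs{c}^2$, so the event $\{\abs{X_n^{-1}-c^{-1}}\geq\e\}$ forces $\abs{X_n-c}\geq\e\abs{c}^2/2$. On the complementary event $\{\abs{X_n}<\abs{c}/2\}$ the reverse triangle inequality forces $\abs{X_n-c}\geq\abs{c}/2$ outright. Hence $\{\abs{X_n^{-1}-c^{-1}}\geq\e\}\subseteq\{\abs{X_n-c}\geq\e\abs{c}^2/2\}\cup\{\abs{X_n-c}\geq\abs{c}/2\}$. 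A union bound together with the hypothesis applied at the two thresholds $\e\abs{c}^2/2$ and $\abs{c}/2$ then yields an upper bound of $e^{-\kappa n\e^2\abs{c}^4/4}+e^{-\kappa n\abs{c}^2/4}$.

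The final step is to verify that both of these exponents dominate the claimed exponent $\kappa n\e^2 c^2\min\{c^2,1\}/4$, which is precisely where the factor $\min\{c^2,1\}$ and the restriction $\e\in(0,1)$ enter. I would dispatch this by two cases on the size of $\abs{c}$. When $c^2\leq 1$ the claimed exponent is $\kappa n\e^2 c^4/4$, matched exactly by the first term, while the second term is even smaller because $\e^2 c^4\leq c^2$ (using $\e<1$ and $c^2\leq1$). When $c^2>1$ the claimed exponent is $\kappa n\e^2 c^2/4$, and both $\e^2 c^4\geq\e^2 c^2$ and $c^2\geq\e^2 c^2$ show each term is no larger. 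Bounding both terms by the common expression then produces the factor $2$ in the stated conclusion.

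I do not expect a genuine obstacle. The single point requiring care is exactly the one flagged above: because the denominator $\abs{X_n}$ is random, a naive substitution is invalid, and the case split (equivalently, the reverse-triangle argument on the favorable event) is what makes the reduction rigorous; everything else is a routine union bound and exponent comparison.
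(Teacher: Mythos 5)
Your proof is correct, and it is essentially the argument behind the cited result: the paper itself states this lemma without proof (importing it from \cite{RushV16}), and the proof given there proceeds exactly as you do, via the identity $\abs{X_n^{-1}-c^{-1}}=\abs{X_n-c}/(\abs{X_n}\abs{c})$, the split on $\{\abs{X_n}\geq \abs{c}/2\}$ versus its complement, a union bound at the two thresholds $\e c^2/2$ and $\abs{c}/2$, and the final exponent comparison using $\e<1$, which is where $\min\{c^2,1\}$ and the factor $2$ arise. No gaps.
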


\begin{applem}\cite{RushV16}
For any scalars $a_1, ..., a_t$ and positive integer $m$, we have  $\left(\abs{a _1} + \ldots + \abs{a_t} \right)^m \leq t^{m-1} \sum_{i=1}^t \abs{a_i}^m$.
Consequently, for any vectors $\un{u}_1, \ldots, \un{u}_t \in \mathbb{R}^N$, $\norm{\sum_{k=1}^t \un{u}_k}^2 \leq t \sum_{k=1}^t \norm{\un{u}_k}^2$.
\label{lem:squaredsums}
\end{applem}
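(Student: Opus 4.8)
The plan is to prove the scalar inequality first by convexity, and then to obtain the vector statement as an immediate corollary of the triangle inequality; both parts are elementary and require no induction or special structure.

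For the scalar bound I would apply Jensen's inequality to the map $\phi(x)=x^m$ on $[0,\infty)$, which is convex for every integer $m\geq 1$ since $\phi''(x)=m(m-1)x^{m-2}\geq 0$. Writing $b_i=\abs{a_i}\geq 0$ and using the uniform weights $1/t$,
\[
\left(\frac{1}{t}\sum_{i=1}^t b_i\right)^{m}\leq \frac{1}{t}\sum_{i=1}^t b_i^m,
\]
and multiplying both sides by $t^m$ yields $\left(\sum_{i=1}^t\abs{a_i}\right)^m\leq t^{m-1}\sum_{i=1}^t\abs{a_i}^m$, as claimed. An equivalent route that avoids Jensen is H\"older's inequality applied to the sequences $(\abs{a_i})_{i=1}^t$ and $(1)_{i=1}^t$ with conjugate exponents $m$ and $m/(m-1)$; this produces exactly the same constant $t^{m-1}$ (the case $m=1$ being trivial equality), so either argument suffices.

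For the vector consequence I would start from the triangle inequality $\norm{\sum_{k=1}^t u_k}\leq \sum_{k=1}^t\norm{u_k}$, square it, and then invoke the scalar inequality just established with $m=2$ and $a_k=\norm{u_k}$:
\[
\norm{\sum_{k=1}^t u_k}^2\leq \left(\sum_{k=1}^t\norm{u_k}\right)^{2}\leq t\sum_{k=1}^t\norm{u_k}^2.
\]

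There is no genuine obstacle in this argument; the only point that warrants a line of justification is the convexity of $x\mapsto x^m$ on the nonnegative reals for integer $m\geq 1$, which is what licenses the use of Jensen's inequality. Since the statement is standard and is quoted directly from \cite{RushV16}, a short self-contained argument of this form is all that the lemma requires.
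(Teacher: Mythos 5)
Your proof is correct; the paper itself states this lemma without proof (it is quoted from \cite{RushV16}), and your Jensen/convexity argument for the scalar bound followed by the triangle inequality and the $m=2$ case for the vector consequence is exactly the standard derivation used there. Nothing further is needed.
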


\section{Pseudo-Lipschitz Function Lemmas}

\begin{applem} \cite[Products of Lipschitz Functions are PL(2)]{RushV16} 
Let $f:\mathbb{R}^p \to \mathbb{R}$ and $g:\mathbb{R}^p \to \mathbb{R}$ be Lipschitz continuous.  Then the product function $h:\mathbb{R}^p \to \mathbb{R}$ defined as $h(x) := f(x) g(x)$ is pseudo-Lipschitz of order 2.
\label{lem:Lprods}
\end{applem}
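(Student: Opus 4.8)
The plan is to verify the defining PL(2) inequality directly for $h(x)=f(x)g(x)$, namely to exhibit a constant $L$ with $\abs{h(x)-h(y)}\le L(1+\norm{x}+\norm{y})\norm{x-y}$ for all $x,y\in\mathbb{R}^p$. The one structural fact I would record first is that a Lipschitz function grows at most linearly: if $f$ and $g$ have Lipschitz constants $L_f$ and $L_g$, then the triangle inequality applied to $f(x)-f(0)$ yields $|f(x)|\le |f(0)|+L_f\norm{x}$, and likewise $|g(y)|\le |g(0)|+L_g\norm{y}$. These two bounds are all that is needed to control the ``size'' factors that appear below.

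Next I would telescope the difference of products by inserting a cross term,
\[
f(x)g(x)-f(y)g(y)=f(x)\bigl(g(x)-g(y)\bigr)+g(y)\bigl(f(x)-f(y)\bigr),
\]
so that the triangle inequality together with the Lipschitz property of $f$ and $g$ gives
\[
\abs{h(x)-h(y)}\le |f(x)|\,L_g\norm{x-y}+|g(y)|\,L_f\norm{x-y}.
\]
Substituting the linear-growth bounds for $|f(x)|$ and $|g(y)|$ then factors out $\norm{x-y}$ and leaves the coefficient $\bigl(|f(0)|L_g+|g(0)|L_f\bigr)+L_fL_g\bigl(\norm{x}+\norm{y}\bigr)$ multiplying it.

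To finish, I would absorb this coefficient into the canonical form $L(1+\norm{x}+\norm{y})$ by choosing $L=\max\{|f(0)|L_g+|g(0)|L_f,\;L_fL_g\}$, which simultaneously dominates the constant term and the coefficient of each of $\norm{x}$ and $\norm{y}$; this yields $\abs{h(x)-h(y)}\le L(1+\norm{x}+\norm{y})\norm{x-y}$, exactly the PL(2) condition from the definition. I do not expect any genuine obstacle here, since the argument is entirely elementary; the only point requiring mild care is the final bookkeeping step, where one must absorb the constant term and the two linear terms into a single factor with one uniform constant $L$.
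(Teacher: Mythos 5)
Your proof is correct: the telescoping decomposition $f(x)g(x)-f(y)g(y)=f(x)(g(x)-g(y))+g(y)(f(x)-f(y))$, the linear-growth bound $|f(x)|\le|f(0)|+L_f\norm{x}$, and the final absorption of constants into a single $L$ constitute the standard argument for this fact. The paper itself gives no proof, citing \cite{RushV16} instead, and your argument matches the usual proof given there.
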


\begin{applem} \cite{RushV16}
Let $\phi: \mathbb{R}^{t+2} \rightarrow \mathbb{R}$ be $PL(2)$. Let $(c_1, \ldots, c_{t+1})$ be constants.  The function $\tilde{\phi}: \mathbb{R}^{t+1} \rightarrow \mathbb{R}$ defined as
\begin{align}
\tilde{\phi}\left(v_1, \ldots, v_t,  w\right) &= \mathbb{E}_{Z}\left[\phi\left(v_1, \ldots, v_t, \sum_{r=1}^{t} c_{r} v_{r} + c_{t+1} Z, w\right)\right]
\label{eq:PLex2}
\end{align}
where $Z \sim \mc{N}(0,1)$, is then also PL(2).
\label{lem:PLexamples}
\end{applem}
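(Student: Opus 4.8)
The plan is to verify the pseudo-Lipschitz-of-order-2 inequality for $\tilde\phi$ directly, by first establishing a pointwise estimate for each fixed realization of the Gaussian coordinate and then integrating out $Z$. Write $u=(v_1,\dots,v_t,w)$ and $u'=(v_1',\dots,v_t',w')$ for two points in $\mathbb{R}^{t+1}$, and introduce the lifted map
\begin{equation*}
\Psi(u,z):=\phi\Big(v_1,\dots,v_t,\textstyle\sum_{r=1}^t c_r v_r + c_{t+1}z,\,w\Big),
\end{equation*}
so that $\tilde\phi(u)=\mathbb{E}_Z[\Psi(u,Z)]$. By Jensen's inequality, $|\tilde\phi(u)-\tilde\phi(u')|\le \mathbb{E}_Z|\Psi(u,Z)-\Psi(u',Z)|$, which reduces the problem to bounding $|\Psi(u,z)-\Psi(u',z)|$ for each fixed $z$ and then controlling the resulting $z$-dependence under the expectation.

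For the pointwise bound I would apply the PL(2) hypothesis on $\phi$ to the two lifted arguments. The decisive observation is that the \emph{difference} of the two lifted arguments does not involve $z$: the $c_{t+1}z$ term cancels, leaving the $(t+2)$-vector $(v_1-v_1',\dots,v_t-v_t',\sum_r c_r(v_r-v_r'),w-w')$, whose norm is bounded by $\sqrt{1+\|c\|^2}\,\|u-u'\|$ via Cauchy--Schwarz, with $c=(c_1,\dots,c_t)$. The \emph{size} factor $1+\|\cdot\|+\|\cdot\|$ in the PL(2) definition does retain a $z$-term: expanding $(\sum_r c_r v_r + c_{t+1}z)^2$ and applying Lemma~\ref{lem:squaredsums} gives a bound of the form $C_1(1+\|u\|+\|u'\|)+C_2|z|$, with constants $C_1,C_2$ depending only on $\|c\|$ and $|c_{t+1}|$. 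Combining the two estimates yields $|\Psi(u,z)-\Psi(u',z)|\le L\sqrt{1+\|c\|^2}\,\big(C_1(1+\|u\|+\|u'\|)+C_2|z|\big)\|u-u'\|$, where $L$ is the PL(2) constant of $\phi$.

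Taking the expectation over $Z\sim\mc{N}(0,1)$ and using $\mathbb{E}|Z|=\sqrt{2/\pi}<\infty$, the $z$-term contributes only a finite additive constant, which I would absorb into the leading factor by the trivial inequality $1+\|u\|+\|u'\|\ge 1$. This produces $|\tilde\phi(u)-\tilde\phi(u')|\le L'(1+\|u\|+\|u'\|)\|u-u'\|$ with an explicit $L'$, establishing PL(2). The only point requiring care --- and the step I would flag as the crux rather than a genuine obstacle --- is the $z$-independence of the argument difference: it is precisely this cancellation that lets $\|u-u'\|$ factor out cleanly, so that after integration the Gaussian moment appears merely as a bounded constant and never multiplies an uncontrolled $\|u-u'\|$-free quantity. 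Finiteness of $\mathbb{E}|Z|$ (indeed of all moments of a standard Gaussian) then closes the argument.
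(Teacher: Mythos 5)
Your proof is correct. The paper does not supply its own argument for this lemma—it is quoted directly from \cite{RushV16}—and your derivation is essentially the standard one used there: lift to the $(t+2)$-dimensional argument, observe that the $c_{t+1}z$ term cancels in the difference of arguments so that $\|u-u'\|$ factors out with a constant depending only on $(c_1,\dots,c_t)$, and then integrate the leftover $|z|$ in the size factor against the finite first moment of the standard Gaussian.
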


\bibliographystyle{IEEEtran}
\bibliography{IEEEabrv,cites}
\end{document}